\theoremstyle{plain}
\newtheorem{theorem}{Theorem}[section]
\newtheorem{lemma}[theorem]{Lemma}
\theoremstyle{definition}
\newtheorem{definition}[theorem]{Definition}
\newtheorem{remark}[theorem]{Remark}
\newtheorem{problem}[theorem]{Problem}
\begin{document}

\renewcommand{\datename}{}

\renewcommand{\abstractname}{Abstract}
\renewcommand{\refname}{References}
\renewcommand{\tablename}{Table}
\renewcommand{\figurename}{Figure}
\renewcommand{\proofname}{Proof}

\title[Recent advances in the monodromy theory]{Recent advances in the monodromy theory of 
integrable Hamiltonian systems}

\author{N. Martynchuk$^{1}$}
\thanks{
$^1$ Department Mathematik, FA University Erlangen-N{\"u}rnberg, Cauerstr. 11, D-91058 Erlangen, Germany \\
$^2$ Bernoulli Institute, University of Groningen, P.O. Box 407, 9700 AK Groningen, The Netherlands \\
$^3$ Duke Kunshan University, No. 8 Duke Avenue, Kunshan, Jiangsu Province, China 215316 \\
E-mail addresses: \textit{martynchuk@math.fau.de, h.w.broer@rug.nl, \\ k.efstathiou@dukekunshan.edu.cn}}

\author{H.~W. Broer$^2$}

\author{K. Efstathiou$^3$}

  \begin{abstract}
The notion of monodromy was introduced by J.~J. Duistermaat as the first obstruction to the existence of global action coordinates in integrable Hamiltonian systems. This invariant was extensively studied since then and was shown to be non-trivial in various concrete examples of finite-dimensional integrable systems.
  The goal of the present paper is to give a brief overview of monodromy and discuss some of its generalisations. In particular, we will discuss the monodromy around a focus-focus singularity and the notions of quantum, fractional and scattering monodromy. The exposition will be complemented with a number of examples and open problems.

\hspace{-5.7mm} \textit{Keywords}: Action-angle coordinates; Hamiltonian system; Liouville integrability; Monodromy; Quantisation.
 
\end{abstract}

\maketitle

\newpage

\section{Introduction} \label{sec/introduction}

In the context of finite-dimensional integrable Hamiltonian systems, the notion of monodromy was introduced by Duistermaat in his seminal paper \cite{Duistermaat1980} published in 1980. He defined his notion of monodromy as the (usual) monodromy of a certain covering map that can naturally be defined for a given integrable system. To be more specific, assume that we are given $n$ independent functions in involution
$(F_1, \ldots, F_n)$ on a symplectic manifold $M$ of real dimension $2n$ 
\footnote{We recall that an integrable Hamiltonian system on a symplectic $2n$-manifold $M$ is specified by $n$ independent functions in involution $F_1, \ldots, F_n$. Typically, $F_1 = H$ is the Hamiltonian of the system and $F_2, \ldots, F_n$ are additional first integrals.}. These functions give rise to the 
so-called \textit{integral} or the \textit{momentum map}
$$F = (F_1, \ldots, F_n) \colon M \to \mathbb R^n$$ and the (defined on an open subset $U \subset \mathbb R^n \times M$)  action
$$
G \colon U \subset \mathbb R^n \times M \to M, \ G(t_1, \ldots, t_n)(x) = g_1^{t_1} \ldots g_n^{t_n}(x),
$$
where $g^t_i$ is the Hamiltonian flow associated to $F_i$. Observe that the action $G$ leaves the fibers
$F^{-1}(f) \subset M$ of $F$ invariant since the functions $F_1, \ldots, F_n$ are in involution.

For simplicity, we shall for the moment consider the case when all of the fibers $F^{-1}(f)$
are compact and connected. Then the action $G$ is a global $\mathbb R^n$ action on $M$. Moreover,
for each regular value $f$ in the image of $F$, the isotropy group $G_f$ is an $n$-dimensional lattice $\mathbb Z^n \subset \mathbb R^n.$ In particular, regular fibers $F^{-1}(f)$ are $n$-dimensional tori; see Arnol'd-Liouville theorem \cite{Liouville1855, Arnold1968, Arnold1978} for detail.
The collection of the lattices $G_f$, with $f$ in the set $R \subset \textup{image}(F)$ of the regular values  of $F$, is a subset of $\mathbb R^n \times R.$ The natural projection 
$\Pr \colon \mathbb R^n \times R \to R$ gives rise to the covering map
\begin{equation} \label{eq/covering}
\Pr \colon \bigcup_{f\in R} G_f \to R.
\end{equation}
This is the covering that we mentioned above. In the paper \cite{Duistermaat1980}, the monodromy of the torus fibration 
$F \colon F^{-1}(R) \to R$ was defined as the (usual) 
monodromy of the covering \eqref{eq/covering}, that is, the representation of the fundamental group $\pi_1(R,f_0)$ in the group
 of automorphisms of $G_{f_0} \equiv \mathbb Z^n$ (the representation is given by lifting paths from $\pi_1(R,f_0)$ to the total space
of the covering \eqref{eq/covering}).

We note that Duistermaat's original definition included the case of Lagrangian torus fibrations over an arbitrary 
manifold (not necessarily an open subset of $\mathbb R^n$). We will not pursue this generality here.

Since Duistermaat's work \cite{Duistermaat1980}, non-trivial monodromy was found in various concrete integrable systems of physics and classical mechanics. The first such example is the spherical pendulum, which is an integrable system
that describes the motion of a particle on the unit sphere in $\mathbb R^3$ in the linear gravitational potential\footnote{
For this system, the functions $F_1 = H$ and $F_2 = J$ are the restrictions of the functions
$
H = \frac{1}{2}\|p\|^2+ q_3$ and $
J = q_1p_2 - q_2 p_1,
$
defined on $T^{*}\mathbb R^3,$ to $T^{*}S^2 \subset T^{*}\mathbb R^3$.}.
The monodromy of the spherical pendulum was
observed to be non-trivial by R. Cushman and computed by J.~J. Duistermaat in the same paper \cite{Duistermaat1980}.
It turned out that $\pi_1(R,f_0)$ is isomorphic to $\mathbb Z$ in this case (see Fig.~\ref{fig/BD_spherical_pendulum}) and that the monodromy is given by the matrix
\begin{equation} \label{eq/monodromy_matrix}
 M_\gamma = \begin{pmatrix} 1 & 1 \\ 0 & 1\end{pmatrix}.
 \end{equation}
Here $\gamma$ corresponds to the generator of the group $\pi_1(R,f_0) \equiv \mathbb Z.$
We shall return to this example and to the computation of the monodromy matrix later in this paper.

\begin{figure}[ht]
\begin{center}
\includegraphics[width=0.95\linewidth]{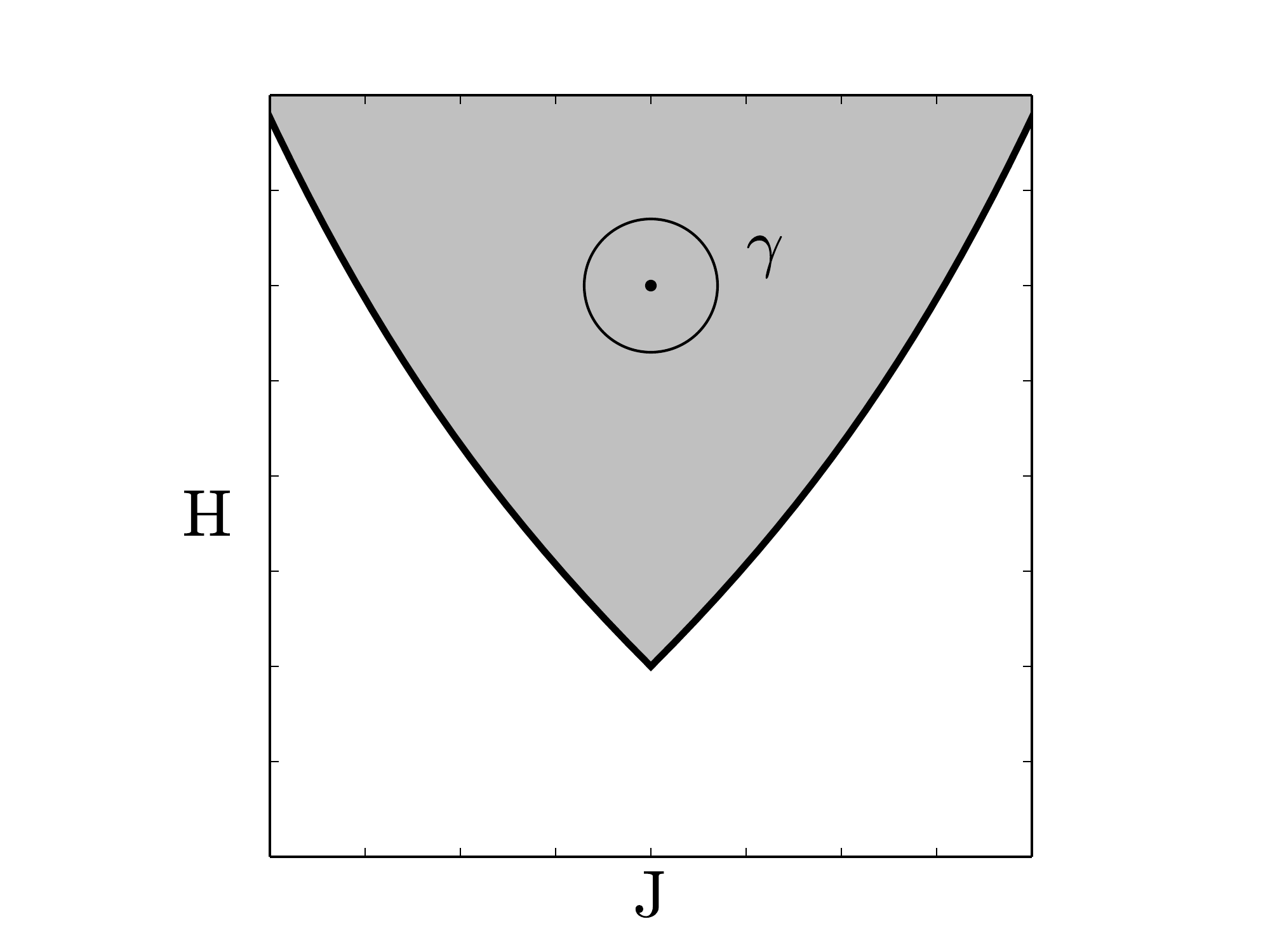}
\caption{Bifurcation diagram for the spherical pendulum and the generator $\gamma$ of  $\pi_1(R)$. }
\label{fig/BD_spherical_pendulum}
\end{center}
\end{figure}

Another example, which is probably the simplest one, is the so-called champagne bottle system (a particle in a Mexican hat potential). 
For this system, the monodromy was computed by L. Bates in \cite{Bates1991}. It turns out that also in this case, the fundamental group $\pi_1(R,f_0)$ is  isomorphic to $\mathbb Z$ and the corresponding monodromy matrix  is given by Eq.~\eqref{eq/monodromy_matrix}. 
 
Several other examples of integrable Hamiltonian systems with non-trivial monodromy are 
the quadratic spherical pendulum
\cite{Duistermaat1980, Bates1993, Efstathiou2005, Cushman2015}, the
coupled angular momenta \cite{Sadovskii1999}, the Lagrange top \cite{Cushman1985}, 
the Hamiltonian Hopf bifurcation \cite{Duistermaat1998},  the Jaynes-Cummings model  \cite{JaynesCummings1963, Pelayo2012, DullinPelayo2016}, the hydrogen atom in crossed fields \cite{Cushman2000}, and the Euler two-center problem \cite{Waalkens2004, Martynchuk2018}.  We note that monodromy can naturally be generalised to integrable non-Hamiltonian systems \cite{Cushman2001, Zung2002}; see also \cite{Bolsinov2015}
for a discussion on monodromy in the context of the Hamiltonisation problem. This invariant can also be extended to the setting of near-integrable  systems \cite{Rink2004, Broer2007a, Broer2007}, which is relevant for applications since real physical systems are seldom integrable. 

It was later understood that most of the known examples of integrable  systems with non-trivial monodromy have one common property, namely, the existence of the so-called
\textit{focus-focus points}. For instance, in the case of the spherical pendulum, this is the unstable equilibrium when the pendulum is at the top of the sphere. In the case of the Mexican hat potential, this is the unstable equilibrium 
when the particle is on the `top of the hat'. The precise result, which is sometimes referred to as the \textit{geometric
monodromy theorem}, was obtained first by L.~M. Lerman and Ya.~L. Umanski{\'i}  \cite{Lerman1994} in the case of a single 
focus-focus point and later by V.~S. Matveev \cite{Matveev1996} and N.~T. Zung \cite{Zung1997} in the case of arbitrary many focus-focus points
on a singular focus-focus fiber. We note that outside the context of integrable Hamiltonian system,  this result was already
obtained by Y. Matsumoto in \cite{Matsumoto1989}. We also note that in the context of complex geometry, the 
 geometric monodromy theorem
follows from the Picard-Lefschetz theory; see \cite{Zung1997, Audin2002, Bolsinov2004} for details. We shall come back to 
case of focus-focus singularities later in this paper, in connection with the \textit{classical} Morse theory 
and principal circle bundles; this is the content of the recent topological theory of monodromy developed in \cite{Martynchuk2018}.

Another breakthrough in the monodromy theory was the quantum formulation of this invariant; first, for the quantum spherical pendulum \cite{Cushman1988, Guillemin1989}
and later, in more generality, by S. V{\~u} Ng{\d{o}}c \cite{Vu-Ngoc1999}. The main idea is that in a quantum integrable system, the joint spectrum of the commuting operators locally has the form of a lattice. Globally, this does not have to be the case, and one can observe a lattice defect in the joint spectrum when transporting an elementary cell around a singularity; see Fig.~\ref{fig/BD_spectrum_pendulum}. 
\begin{figure}[htbp]
  \includegraphics[width=\linewidth]{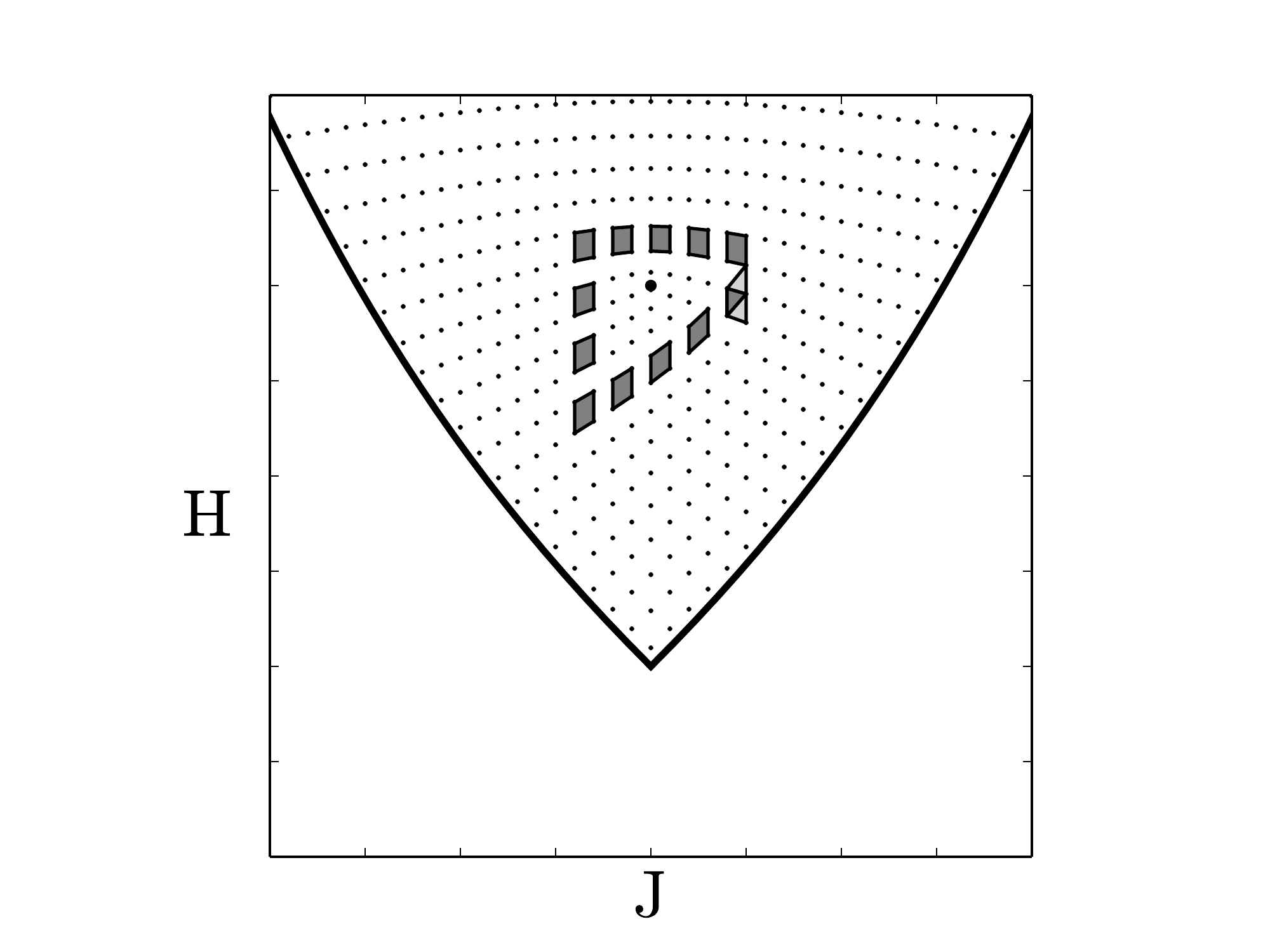}
  \caption{The joint spectrum of the quantum spherical pendulum ($\hbar = 0.1$), and the transport of an elementary cell around the focus-focus point.}
  \label{fig/BD_spectrum_pendulum}
\end{figure}
This lattice 
defect is usually interpreted as the non-existence of smooth global quantum number assignment for the given quantum integrable system. We note that this is very similar to what happens classically when one looks at the action coordinates and the so-called integer affine structure \cite{Zung1997}. We also note that quantum monodromy is always given by the classical monodromy of the underlying classical integrable Hamiltonian system \cite{Vu-Ngoc1999}.

This is, in short, what is classically known about monodromy. More recently, several generalised versions of monodromy have been defined. The most important and general of these are the so-called
 fractional and scattering monodromies as well as their quantum analogues. 
 The notion of fractional monodromy was introduced in the paper \cite{Nekhoroshev2006} as a generalization of the usual Duistermaat's monodromy 
 (sometimes referred to as Hamiltonian monodromy) 
 to the case of singular fibrations; it naturally appears in integrable systems with hyperbolic singularities.
 Scattering monodromy appears in completely integrable systems with non-compact invariant manifolds; it was originally defined by L. Bates and R. Cushman in \cite{Bates2007} for a two degree of freedom hyperbolic oscillator and later generalized in the works \cite{DullinWaalkens2008, Efstathiou2017} and \cite{Martynchuk2019}. 
 
 The main goal of the present paper is to give a concise and systematic overview of the monodromy theory, and of some of the recent developments in this field. Our main focus will be on the classical notion of monodromy and some of the generalised versions of this invariant. We will complement our exposition with various concrete examples and formulate a few open problems. For a more thorough exposition of the state of the art of the monodromy theory and integrable systems, we refer the reader
 to \cite{Bolsinov2004, Bolsinov2006, Cushman2015, Zhilinskii2010, Sepe2018, Martynchuk2018}. Several parts of this work appeared in a more extended form in \cite{Martynchuk2018}.

\section{Preliminaries on Hamiltonian monodromy}

The notion of Hamiltonian monodromy\footnote{Duistermaat's notion of monodromy is usually referred to as Hamiltonian monodromy to distinguish it from other types of monodromy, such as fractional monodromy or monodromy of a covering map.} was originally introduced as the first obstruction to the existence of global action angle-coordinates in integrable systems \cite{Duistermaat1980}. We briefly review a construction of these coordinates here and explain the relation to the definition of Hamiltonian monodromy given in the Introduction. Then we discuss a connection of Hamiltonian monodromy to Picard-Lefschetz theory, the latter being a very classical situation in which monodromy of non-singular hypersurfaces appear. The discussion  continues in the next section, where we review
the classical theorem which describes the monodromy around a focus-focus singularity and discuss several more recent results.

\subsection{Liouville integrability, action-angle coordinates and monodromy}

We recall that a Hamiltonian system 
\begin{equation*}
\dot{x} = X_H, \ \ \omega(X_H, \cdot) = - dH,
\end{equation*}
on a $2n$-dimensional symplectic manifold $(M, \omega)$ is called Liouville integrable if there exist almost everywhere independent functions
$F_1 = H, \ldots, F_n$ that are in \textit{involution} with respect to the symplectic form $\omega$:
$$
\{F_i,F_j\} = \omega(X_{F_i}, X_{F_j}) = 0.
$$
We note that by definition, for each $i$ and $j$, the function $F_i$ is invariant with respect to the Hamiltonian flow of $F_j$; in particular,  the functions $F_i$ are first integrals of the flow of $X_H$.
Various Hamiltonian systems, such as the Kepler problem, the spherical pendulum, the geodesic flow on an ellipsoid,  Euler, Lagrange and Kovalevskaya tops, the Calogero-Moser systems,
are integrable in this sense. 

The map $F = (F_1, \ldots, F_n)$ consisting of the integrals $F_i$ is called the \textit{integral map} (or the \textit{energy-momentum map}) of the integrable system. It encodes both the dynamics ($F_1 = H$) and the symmetry associated to the system. A central problem in the theory of integrable systems is to understand the geometry of such integral maps; in other words, to classify them up to a topological, smooth or symplectic equivalence.

It is well-known that, in the case when the function $F$ is proper, any regular fiber $F^{-1}(\xi_0)$ is an $n$-dimensional torus (or a union of several $n$-tori). Moreover, a small tubular neighborhood of any such torus is
a trivial torus bundle $D^n \times T^n$ admitting \textit{action-angle coordinates} 
$$I \in D^n \ \mbox{ and } \ \varphi \mod 2\pi \in T^n, \ \ \omega = dI \wedge d\varphi.$$ 
This is the content of the Arnol'd-Liouville theorem \cite{Liouville1855, Arnold1968, Arnold1978}.
It follows from the existence of action-angle coordinates that the motion (that is, the flow of $X_H$) is quasi-periodic on each torus $\{\xi\} \times T^n$.

The above coordinates are sometimes referred to as semi-local since they exist in a neighborhood of a given invariant torus. The global situation (of when do such coordinates exist globally)
was clarified by Nekhoroshev \cite{Nekhoroshev1972} and  Duistermaat \cite{Duistermaat1980}. We briefly review  a few main results of these works below. 

Let 
$R \subset image(F)$ denote the set of the regular values of $F$ that are in the image of $F$. Assume for the moment that all of the fibers $F^{-1}(f)$ are compact and connected. Then global action-angle
coordinates exist if the following two conditions are satisfied (see \cite{Nekhoroshev1972}):
$$
\pi_1(R, f_0) = 0 \ \mbox{ and} \ H^2(R, \mathbb R) = 0.
$$
Otherwise, the torus bundle $F \colon F^{-1}(R) \to R$ is not necessarily globally trivial, and certain obstructions to the triviality of this bundle
appear; see \cite{Duistermaat1980}. One of such obstructions
is monodromy, which we have briefly discussed in the introduction. It is an obstruction in the sense that its non-triviality entails to the non-existence of global action coordinates. To see this, let us assume for simplicity that the symplectic form $\omega$ is exact:
$\omega = d\eta$. 
Then the action coordinates $I = (I_1, \ldots, I_n)$ can be defined by the formula
$$
I_i = \dfrac{1}{2\pi}\int I_i d\varphi_i = \dfrac{1}{2\pi}\int_{\alpha_i} I d\varphi =  \dfrac{1}{2\pi}\int_{\alpha_i} \eta + c_i,
$$
where $\alpha_i$ is the $\varphi_i$-cycle on the corresponding Liouville torus $F^{-1}(f)$ and $c_i$ does not depend on $f$. The cycles $\alpha_1, \ldots, \alpha_n$ form a basis of the first \textit{integer} homology group of $F^{-1}(f)$. But this homology group can be identified with the isotropy group $G_f$ of the global $\mathbb R^n$ action on $F^{-1}(f)$; see Introduction (Sec.~\ref{sec/introduction}).  Thus, the non-triviality of monodromy of the covering, Eq.~\eqref{eq/covering}, formed by the lattices
$G_f$ implies that it is not possible to choose the cycles $\alpha_1, \ldots, \alpha_n$ in a continuous way over 
$R$: transports of these homology cycles along different paths do not give the same result. In particular, it is not possible to choose the action coordinates in a globally \textit{smooth} way: transports along different paths result in different sets 
of action coordinates $I$ and $I'$ related by a transformation $I = MI'$, where $M \in \textup{SL}(n,\mathbb Z)$. After excursions along elements of $\pi_1(R,f_0)$, we get the monodromy automorphisms, described in the Introduction.

\subsection{Picard-Lefschetz theory} \label{sec/PLtheory} 

In the context of fibrations by complex tori, the notion of Hamiltonian monodromy is essentially the classical 
monodromy that appears in Picard-Lefschetz theory.

Let $\mathbb C^2$ be the complex two-plane with complex coordinates $(z,w)$. Following \cite{Bolsinov2004}, consider the symplectic transformation 
\begin{equation} \label{eq/A}
A(z,w) \to (w^{-1},zw^2)
\end{equation} 
(defined for $w\ne 0$). Let the compact manifold $M$ be defined by gluing the boundary solid tori of
\begin{equation} \label{eq/u1}
U_1 = \{ (z,w) \in \mathbb C^2 \mid |zw| \le \varepsilon, |z| \le 1, |w| \le 1 \}
\end{equation}
using this transformation. (The boundary solid tori of $U_1$ are given by the sets $\{(z,w) \in U_1\mid |z| = 1\}$ and $\{(z,w) \in U_1\mid |w| = 1\}$.) Observe that the function $f \colon \mathbb C^2 \to \mathbb C$ defined by
$$
f(z,w) = zw
$$ 
descends to a smooth function on this manifold. It has one critical fiber: the preimage of the origin in $\mathbb C$. All of the other fibers are regular two-tori. Let $\gamma$ be a small circle in $\mathbb C$ around the origin. According to the Picard-Lefschetz formula \cite{Arnold2012}, the monodromy of $f$ along $\gamma$ is given by the matrix
\begin{equation*}
 M_\gamma = \begin{pmatrix} 1 & 1 \\ 0 & 1\end{pmatrix}.
 \end{equation*}
 
Now observe that the holomorphic function $f$ can be viewed as an energy-momentum map of a real integrable Hamiltonian system on $M$: the functions in involution are given by the real and imaginary part of the function $f$; see \cite{Flaschka1988}.
By a topological definition of Hamiltonian monodromy in terms of homology cycles, this matrix is the monodromy matrix along $\gamma$
associated to this integrable system.

For the above argument, it is important that the phase space is a complex manifold and that $f$ is a holomorphic  (meromorphic)
function on this manifold. We note that in a general situation, an integrable Hamiltonian system is only defined on a real symplectic manifold and, even if the manifold can be endowed with a complex structure, the integrals of motion are not always
meromorphic functions. Therefore, the Picard-Lefschetz formula is not always applicable; at least, not directly. Nonetheless, in various examples of integrable systems the integrals of motions are polynomials and it is possible to complexify them. Then one can 
use the Picard-Lefschetz theory in the complexified domain and deduce information about monodromy in the original system. We refer to
\cite{Audin2002, Beukers2003, Sugny2008} for more information.

\section{Hamiltonian monodromy} \label{section/Hamiltonian_monodromy}

In this section, we continue our discussion of Hamiltonian monodromy. We review the  \textit{geometric monodromy theorem}, which describes the monodromy around a focus-focus singularity. This central result in monodromy theory allows one to compute monodromy in various concrete integrable systems by computing the complexity of the focus-focus fibers of such systems. We then explain a dynamical manifestation of non-trivial Hamiltonian monodromy. Afterwards, we come back to the spherical pendulum and discuss the monodromy from a different point of view based on Morse theory and
Chern numbers (a general situation is treated in the work \cite{Martynchuk2020}). We conclude this section with an extension of Hamiltonian monodromy to nearly integrable systems.

\subsection{Monodromy around a focus-focus singularity}

Hamiltonian monodromy was first observed to be non-trivial in concrete integrable systems of classical mechanics and molecular physics. It was later observed  that in the typical case of $n = 2$ degrees of freedom, non-trivial monodromy is manifested by the presence of the 
so-called \textit{focus-focus} points of the integral fibration $F$; see \cite{Lerman1994, Matveev1996, Zung1997}. 
(The singular point $z = w = 0$ of the function $f = zw$ from Subsection~\ref{sec/PLtheory}
is an example  of a focus-focus point.)
Such a result is often referred to as \textit{geometric monodromy theorem}. Below we discuss a few different approaches to this theorem. 

First, let us recall the notion of a focus-focus singularity.

\begin{definition}
Consider a two-degree of freedom integrable system $F = (H,J) \colon M \to \mathbb R^2$ on a $4$-manifold $M$.
Let $x_0$ be a rank zero singular point of $F$, that is,   $dF_{x_0} = 0$. The point $x_0$ is called a focus-focus point of 
$F = (H,J)$ if the Hessians $d^2_{x_0}H$ and $d^2_{x_0}J$ are independent and there exists local canonical coordinates near $x_0$ such that 
\begin{align*}
d^2_{x_0}H &= A_1(dp_1dq_1+dp_2dq_2) + B_1(dp_1dq_2-dp_2dq_1) \\ 
d^2_{x_0}J &= A_2(dp_1dq_1+dp_2dq_2) + B_2(dp_1dq_2-dp_2dq_1).
\end{align*}
\end{definition}

\begin{remark}
The focus-focus singularity is an example of a non-degenerate singularity of an integrable system.
Alongside focus-focus points, there are also other types of non-degenerate singular points of integrable two-degrees of freedom systems: elliptic-elliptic, hyperbolic-hyperbolic, elliptic-regular, etc.; see \cite{Bolsinov2004} for details. 
\end{remark}

\begin{remark}
We note that by the Williamson theorem, not only the quadratic parts of $H$ and $J$, but also the map $F = (H,J)$ itself can be put into a normal form near a singular focus-focus point: there exist local canonical coordinates near $x_0$ such that 
\begin{align*}
H &= H(p_1q_1+p_2q_2, p_1q_2-p_2q_1) \\ 
J &= J(p_1q_1+p_2q_2, p_1q_2-p_2q_1).
\end{align*}
We note that a similar statement holds for other types of non-degenerate singular points; see \cite{Bolsinov2004}.
\end{remark}

Assume that we are given a proper integral map $F$ with an isolated critical value $f_0$ such that the critical fiber $F^{-1}(f_0)$ contains
a (finite) number of focus-focus points.
The geometric monodromy theorem describes the monodromy of $F$ around $f_0$ in this situation in terms of the number
 of the focus-focus points.

\begin{theorem} \textup{(Geometric monodromy theorem, \cite{Matsumoto1989, Lerman1994, Matveev1996, Zung1997, Vu-Ngoc1999})} \label{theorem/geometric_monodromy_theorem0}
Monodromy around a focus-focus singularity is given
 by the matrix
 \begin{equation*}
 M = \begin{pmatrix} 1 & m \\ 0 & 1\end{pmatrix},
 \end{equation*}
 where $m$ is the number of the focus-focus points on the singular fiber.
\end{theorem}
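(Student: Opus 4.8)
The plan is to reduce the computation of the monodromy matrix to a model situation near a single focus-focus point and then account for the effect of having $m$ such points on the same singular fiber. First I would establish the local normal form: by the Williamson/Eliasson normal form (the second Remark above), near each focus-focus point the map $F = (H,J)$ is symplectically equivalent to the model given by the functions $q = p_1 q_1 + p_2 q_2$ and $p = p_1 q_2 - p_2 q_1$, which is precisely the real-and-imaginary-part decomposition of the holomorphic map $f = zw$ on $\mathbb C^2$ with $z = q_1 + i q_2$, $w = p_1 - i p_2$ (up to the orientation conventions). This identifies the local picture with the Picard--Lefschetz model of Subsection~\ref{sec/PLtheory}, where we already know (from the Picard--Lefschetz formula) that the monodromy around the critical value of a single node is
\begin{equation*}
M_\gamma = \begin{pmatrix} 1 & 1 \\ 0 & 1\end{pmatrix}.
\end{equation*}

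Next I would make the global-to-local passage precise. Since $f_0$ is an \emph{isolated} critical value and $F$ is proper, I can choose a small disk $D \subset \mathbb R^2$ around $f_0$ whose only critical value is $f_0$, so that $\pi_1(D \setminus \{f_0\}) \cong \mathbb Z$ is generated by a small loop $\gamma$. The monodromy of the full system around $f_0$ is then the automorphism of $H_1(F^{-1}(f), \mathbb Z) \cong \mathbb Z^2$ induced by parallel transport of homology cycles along $\gamma$. The key geometric input is a suitable vanishing-cycle description of the regular fiber near the focus-focus fiber: one distinguishes the cycle $\alpha$ that \emph{survives} onto the singular fiber (the cycle generated by the global $S^1$-symmetry whose Hamiltonian is $J$) from a transversal cycle $\beta$ that collapses as $f \to f_0$. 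With respect to the basis $(\alpha, \beta)$ the surviving cycle is fixed, so the monodromy is upper-triangular of the form $\begin{pmatrix} 1 & k \\ 0 & 1 \end{pmatrix}$, and it remains only to compute the integer $k$.

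To get $k = m$ I would use additivity of the local contributions. Because the $m$ focus-focus points all lie on the \emph{same} singular fiber and each is locally the node model, I would localise the transport of the vanishing cycle $\beta$ to disjoint neighborhoods of the $m$ singular points; the surviving cycle $\alpha$ threads through all of them coherently (it is the orbit of the Hamiltonian $S^1$-action near the fiber). Each singular point contributes one Picard--Lefschetz twist $\beta \mapsto \beta + \alpha$ to the transport, and since these contributions occur along the same loop and add, the total is $\beta \mapsto \beta + m\,\alpha$, giving $k = m$. Equivalently, one can argue via the Euler number / Lefschetz-number count: the singular fiber is obtained from nearby tori by $m$ independent vanishing cycles, and the monodromy is the composition of the $m$ corresponding symplectic Dehn twists, whose product in $\mathrm{SL}(2,\mathbb Z)$ is exactly $\begin{pmatrix} 1 & m \\ 0 & 1 \end{pmatrix}$.

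The main obstacle will be the additivity step, i.e.\ justifying rigorously that the global monodromy is the composition (sum) of the $m$ local contributions rather than merely asserting it. This requires care in two places: first, in producing a consistent global basis $(\alpha,\beta)$ compatible with \emph{all} the local models simultaneously, which uses that the $S^1$-cycle $\alpha$ is globally well defined on the singular fiber and is the common ``surviving'' direction at every focus-focus point; and second, in showing that the transport of $\beta$ decomposes into contributions supported near each singular point, with no extra cross-terms from the regular part of the fiber. The cleanest route is Zung's argument: decompose the singular fiber into a cylinder carrying the $S^1$-action together with $m$ ``pinched'' pieces, compute the local monodromy on each piece from the node model, and glue, invoking that transport along the regular part is trivial. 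This is exactly where the references \cite{Matveev1996, Zung1997} do the real work, and I would follow that decomposition.
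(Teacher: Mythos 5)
Your proposal is correct in strategy, but it takes a genuinely different route from the paper's own sketch. The paper argues: (i) by Matveev--Zung \cite{Matveev1996, Zung1997}, the number $m$ of focus-focus points is a \emph{complete} topological invariant of the Liouville fibration near the singular fiber; (ii) hence the monodromy can only be a function of $m$; (iii) so it suffices to compute one model with $m$ focus-focus points, which is done in the holomorphic category where the Picard--Lefschetz formula of Subsection~\ref{sec/PLtheory} applies rigorously. You invert this logic: you work directly in the smooth category, using the Eliasson normal form at each of the $m$ points and the semi-global circle action (Theorem~\ref{zungaction1}) to pin down the invariant cycle, and you obtain $k=m$ by additivity of the $m$ local Dehn twists. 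The trade-off is that the paper never has to justify additivity in the $C^\infty$ setting --- it is absorbed into the classification theorem plus a complex model where the multi-node Picard--Lefschetz formula already yields the product of $m$ parallel twists --- whereas your proof hinges on exactly the decomposition/gluing lemma you flag as the main obstacle. Both sketches therefore lean on \cite{Matveev1996, Zung1997}, but for different statements: you need their fiber decomposition, the paper needs their uniqueness result. What your version buys is a more direct explanation of why the answer is $m$ (one twist per point, with coherent signs because the action is anti-Hopf at every point, per Theorem~\ref{zungaction1}); it is closer in spirit to Matsumoto's torus-fibration argument \cite{Matsumoto1989} and to the circle-action proofs the paper mentions after Theorem~\ref{zungaction1}, rather than to the classification-based proof the paper actually outlines.

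One correction of terminology that does not affect your matrix computation: your labels for the two cycles are inverted. The $S^1$-orbit class $\alpha$ is the \emph{vanishing} cycle --- it collapses onto the focus-focus points and dies in $H_1$ of the singular fiber --- while your transversal cycle $\beta$ is the one whose class survives in the homology of the singular fiber. The upper-triangularity is still correct, but for a different reason than ``$\alpha$ survives'': the monodromy fixes $\alpha$ because it is the orbit class of a circle action defined on the whole neighborhood (equivalently, because Picard--Lefschetz twists fix their own vanishing cycle).
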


One way to prove this theorem is to prove that the number $m$ of the focus-focus point on a singular focus-focus fiber $F^{-1}(f_0)$
(also called the \textit{complexity} of this fiber)
is a complete topological invariant of the Liouville fibration in a tubular neighborhood of this fiber $F^{-1}(f_0)$; see \cite{Matveev1996, Zung1997}. The monodromy is a particular 
invariant of this fibration, and is thus a function of the number $m$ of the focus-focus points. To prove the geometric monodromy theorem, it is  sufficient to prove the statement for a particular example of an integrable system with $m$ 
focus-focus points. The rest follows from Picard-Lefschetz theory; cf. Subsection~\ref{sec/PLtheory}. We refer to \cite{Zung1997} for details.

\begin{remark}
We have noted above that the complexity is a complete semi-local topological invariant of a focus-focus singularity; see 
\cite{Zung1997, Bolsinov2004}. This is not the case symplectically:
there exist  infinitely many (semi-locally) non-symplectomorphic Lagrangian fibrations even in the case of complexity $m = 1$; see \cite{Vu-Ngoc2003}. We note that a similar result does not hold even in the smooth category: there exist smoothly non-equivalent Lagrangian fibrations in the case of $m \ge 2$ focus-focus points on a given focus-focus fiber; see the works \cite{Bolsinov2004, Izosimov2011, Bolsinov2019} for details.
\end{remark}

\begin{remark}
We note that in concrete problems of physics and classical mechanics, the complexity of focus-focus fibers is usually small. This can be proven rigorously in many cases in terms of the topology of the underlying symplectic manifold \cite{Smirnov2013}. For instance, in
$\mathbb R^4$ one can only have  complexity $m = 1$ focus-focus fibers ($\mathbb R^4$ does not contain Lagrangian spheres \cite{Bolsinov2006}).
For integrable systems 
on $T^{*}S^2$, one can have complexity $m = 1$ or $m = 2$, but not $3$ or more. We refer to the work 
\cite{Smirnov2013} for details.
\end{remark}

 A related  result in the context of the focus-focus singularities is that they come  with a Hamiltonian circle action \cite{Zung1997, Zung2002}.

\begin{theorem}\textup{(Circle action near focus-focus, \cite{Zung1997, Zung2002})} \label{zungaction1} 
In a neighborhood of a singular focus-focus fiber, there exists a unique (up to orientation) Hamiltonian 
circle action which is free outside the singular focus-focus points.
Near each focus-focus point, the momentum of the circle action can be written as
\begin{align*}
  J = \frac12 (q_1^2+p_1^2) - \frac12 (q_2^2+p_2^2)
\end{align*}
for some local canonical coordinates $(q_1, p_1, q_2, p_2)$. In particular, the circle action defines the anti-Hopf fibration near each singular point.
\end{theorem}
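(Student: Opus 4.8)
The plan is to combine a \emph{local} normal form at each focus-focus point with a \emph{semi-local} gluing argument over the regular part of the fibration, following the strategy of Zung. By the smooth (Eliasson) refinement of the Williamson normal form recalled above, near each focus-focus point $m_k$ on the singular fiber there are local canonical coordinates $(q_1,p_1,q_2,p_2)$ in which the components of $F$ become smooth functions of the two commuting quadratic integrals
\[
\mathfrak{q}_1 = q_1 p_2 - q_2 p_1, \qquad \mathfrak{q}_2 = q_1 p_1 + q_2 p_2 .
\]
Here $\mathfrak{q}_2$ generates a hyperbolic (non-periodic) flow, while $\mathfrak{q}_1$ generates an elliptic one. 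A direct computation of the Hamiltonian vector field shows that, after a linear symplectic change of coordinates, $\mathfrak{q}_1$ can be written as $J = \tfrac12(q_1^2+p_1^2) - \tfrac12(q_2^2+p_2^2)$, whose flow rotates the $(q_1,p_1)$-plane and counter-rotates the $(q_2,p_2)$-plane, both with period $2\pi$. This flow is therefore $2\pi$-periodic, free away from the origin, and fixes $m_k$; it is the desired local Hamiltonian circle action, and its orbits on the nearby level sets are exactly the anti-Hopf fibers. This yields both the local existence and the stated momentum.

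Next I would globalise. Let $U$ be a saturated neighborhood of $F^{-1}(f_0)$ and let $D \ni f_0$ be a small disk in the base with $D^{*} = D \setminus \{f_0\}$; over $D^{*}$ the map $F$ restricts to a regular Lagrangian $T^2$-bundle with period lattice $G_f \cong \mathbb Z^2$ and monodromy $\left(\begin{smallmatrix} 1 & m \\ 0 & 1\end{smallmatrix}\right)$, by Theorem~\ref{theorem/geometric_monodromy_theorem0}. The subgroup of $G_f$ fixed by this monodromy is generated by a single primitive class $\gamma_f$ --- the \emph{vanishing cycle} that shrinks to a point as $f \to f_0$ --- and, being monodromy-invariant, $\gamma_f$ is single-valued over $D^{*}$. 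Flowing along the Hamiltonian of the action coordinate dual to $\gamma_f$ (normalised to period $2\pi$) defines a free Hamiltonian circle action on $F^{-1}(D^{*})$. Near each $m_k$ the vanishing cycle is, by construction, isotopic to an orbit of the local $J$-action, so the two actions coincide on the overlap.

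The main obstacle is to show that this circle action, defined a priori only on the regular part $F^{-1}(D^{*})$, extends \emph{smoothly} --- indeed as a Hamiltonian action having $m_1,\dots,m_k$ as its only fixed points --- across the focus-focus points, and that the extension is unique up to orientation. For the extension I would pass to the Eliasson model at each $m_k$, where the period of the transverse hyperbolic flow diverges logarithmically while the period of the $J$-flow stays constant; controlling this $\log$-singularity of the action/rotation functions is what guarantees that the globally defined action has a bounded, smooth period and matches the local anti-Hopf action, so that the pieces glue to a single smooth momentum map for $S^1$. Uniqueness up to orientation then follows because a free circle action on the regular tori is determined up to sign by the homotopy class of its orbits, and $\gamma_f$ is the unique primitive monodromy-invariant class up to sign, with orientation reversal corresponding to $J \mapsto -J$. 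I expect the delicate point to be precisely the smooth matching at the fixed points, where the logarithmic behavior of the complementary period must be shown not to obstruct smoothness of the surviving circle action.
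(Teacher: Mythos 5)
First, a point of comparison: the paper does not prove Theorem~\ref{zungaction1} at all --- it quotes it from \cite{Zung1997, Zung2002} --- so your proposal has to be measured against the standard argument of Zung and V{\~u} Ng{\d{o}}c rather than against anything in the text. Your local step is correct: in the Eliasson normal form the integral $\mathfrak{q}_1 = q_1p_2 - q_2p_1$ generates a linear $2\pi$-periodic flow with weights $(1,-1)$, hence is linearly symplectically conjugate to $J = \tfrac12(q_1^2+p_1^2) - \tfrac12(q_2^2+p_2^2)$, the anti-Hopf momentum. The problems are in the semi-local step. Your route through the geometric monodromy theorem (Theorem~\ref{theorem/geometric_monodromy_theorem0}) carries a circularity risk: as the paper itself notes, Theorem~\ref{zungaction1} is one of the standard tools for \emph{proving} the geometric monodromy theorem, so you must insist on a proof of the latter that is independent of the circle action (Matveev/Zung via the complexity invariant, or Picard--Lefschetz). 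More importantly, the detour is unnecessary. In the Eliasson chart at a single point $m_1$ one has $\mathfrak{q}_1 = \psi_1 \circ F$ for a smooth map $\psi_1$ defined on a full neighborhood of the critical value $f_0$; hence $J := \psi_1 \circ F$ is smooth on the whole saturated neighborhood $F^{-1}(D)$. Its flow is fiberwise linear on every regular torus, it is $2\pi$-periodic on the part of each torus meeting the chart, and a linear flow on $\mathbb T^2$ with one closed orbit of period $2\pi$ is $2\pi$-periodic everywhere; since every fiber over a small disk $D$ meets the chart (the quadratic map $(\mathfrak{q}_1,\mathfrak{q}_2)$ is onto a neighborhood of the origin), the flow of $J$ is $2\pi$-periodic on all of $F^{-1}(D)$, and by continuity also on the singular fiber. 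This single observation produces the global smooth Hamiltonian circle action at once.

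Second, the step you yourself flag as the delicate one --- smooth extension of your semi-globally defined action across the fixed points --- is a genuine gap as written, and the mechanism you propose for closing it (controlling the logarithmic divergence of the hyperbolic period) is the wrong lever: the logarithm lives entirely in the action variable dual to the \emph{non}-invariant cycle and never threatens the circle action. What actually closes the gap is much simpler: on the (connected) regular overlap, your semi-global momentum $I_{\gamma_f}$ and the local Eliasson momentum $\psi_1 \circ F$ generate the same $2\pi$-periodic fiberwise-linear flow (same primitive orbit class, same period), so their Hamiltonian vector fields coincide and the two functions differ by a constant; consequently $I_{\gamma_f}$ extends smoothly across $f_0$ as $\psi_1 \circ F + c$, whose flow near each $m_k$ is exactly the local anti-Hopf action. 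No asymptotic estimate is needed. Your uniqueness argument (the orbit class of any fiber-preserving circle action is monodromy-invariant, and $\pm\gamma_f$ is the only primitive invariant class) is correct, but note that it too leans on the geometric monodromy theorem through $m \neq 0$, so the circularity caveat above applies to it as well.
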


 One implication of Theorem~\ref{zungaction1} is that it allows one to give a different proof of the geometric monodromy theorem by looking at the circle action. For example, one can apply the Duistermaat-Heckman theorem; see
 \cite{Zung2002}.  A related and purely topological proof will be given below on the example of the spherical pendulum, following the point of view of \cite{Martynchuk2018, EfstathiouMartynchuk2017, Martynchuk2017, Martynchuk2020}.
 For other approaches to the geometric monodromy theorem, we refer the reader to \cite{Vu-Ngoc2000, Audin2002, Cushman2015, Efstathiou2017}.
 
 \subsection{Dynamical manifestation of monodromy}
 
 In this subsection we briefly comment on implications of non-trivial monodromy for dynamics. More specifically, we 
 make a connection to the so-called \textit{rotation number}  \cite{Cushman2015}.
 
 We assume that the energy-momentum map $F = (H,J)$ is such that all of the fibers $F^{-1}(f)$ are compact and connected. Moreover, 
we assume that $F$ is invariant under the Hamiltonian circle action given by the Hamiltonian flow $\varphi^t_J$  of $J$. 
Let $F^{-1}(f)$ be a regular torus. Consider a point $x \in F^{-1}(f)$ and the orbit of the circle action passing through this point. 
The trajectory $\varphi^t_H(x)$ leaves the orbit of the circle action at $t = 0$ and then returns back to the same orbit at some time $T > 0$. The time $T$ is called the \textit{the first return time}.
The \textit{rotation number} $\Theta = \Theta(f)$ is defined by $\varphi^{2\pi {\Theta}}_J(x) = \varphi^{T}_H(x)$.  With this notation, there is the following result.

\begin{theorem} \textup{(Monodromy and rotation number, \cite{Cushman2015})} \label{theorem/rotation_number_mon}
 The Hamiltonian monodromy of the torus bundle $F \colon F^{-1}(\gamma) \to \gamma$ is given by
$$
\begin{pmatrix}
 1 & m \\
 0 & 1
\end{pmatrix} \in \mathrm{SL}(2,\mathbb Z),
$$
where $-m$ is the variation of the rotation number $\Theta$.
\end{theorem}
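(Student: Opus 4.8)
The plan is to produce a $\mathbb Z$-basis of $H_1(F^{-1}(f);\mathbb Z)$ adapted to the circle action, compute the monodromy representation in this basis, and identify the single off-diagonal entry with the variation of $\Theta$.

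First I would describe each regular fiber via its period lattice. Since $\{H,J\}=0$, the Hamiltonian flows $\varphi^t_H$ and $\varphi^t_J$ commute and together define an $\mathbb R^2$-action $(t,s)\cdot x = \varphi^t_H\varphi^s_J(x)$ on $F^{-1}(f)$. Fixing a base point trivialises the fiber as $F^{-1}(f)\cong \mathbb R^2/\Lambda(f)$, where $\Lambda(f)\subset\mathbb R^2$ is the period (isotropy) lattice; this is the lattice $G_f$ of the Introduction, canonically identified with $H_1(F^{-1}(f);\mathbb Z)$. Because $J$ generates a $2\pi$-periodic circle action, the orbit of this action is the cycle $\beta$ corresponding to the lattice vector $(0,2\pi)$. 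I would then build the transversal cycle from the return data: by definition $\varphi^T_H(x)=\varphi^{2\pi\Theta}_J(x)$, equivalently $(T,-2\pi\Theta)\cdot x = x$, so $\alpha := (T,-2\pi\Theta)\in\Lambda(f)$. Since $T$ is the \emph{first} return time to the circle orbit, $T$ is the smallest positive value occurring as a first coordinate of a lattice vector; hence the projection $\Lambda(f)\to\mathbb R$ onto the $t$-axis has image $T\mathbb Z$ and kernel $\beta\mathbb Z$, so $(\beta,\alpha)$ is a basis of $\Lambda(f)$, and thus of $H_1(F^{-1}(f);\mathbb Z)$.

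Next the monodromy is read off by transporting this basis around $\gamma$. The cycle $\beta$ is globally defined over all of $R$ with constant period $2\pi$, so it extends to a continuous section of the bundle of first-homology lattices and is therefore monodromy-invariant; this already forces the matrix to be upper unitriangular, $\begin{pmatrix}1 & m\\0 & 1\end{pmatrix}$. For the second column I would track $\alpha=(T,-2\pi\Theta)$: the return time $T$ is a single-valued function on $R$, whereas $\Theta$ is only defined modulo $\mathbb Z$, since adding an integer to $\Theta$ does not change $\varphi^{2\pi\Theta}_J(x)$. Choosing a continuous branch of $\Theta$ along $\gamma$, traversal of $\gamma$ returns $T$ to itself and shifts $\Theta$ by its variation $\Delta\Theta\in\mathbb Z$, so the transported cycle is $(T,-2\pi(\Theta+\Delta\Theta)) = \alpha - \Delta\Theta\,\beta$. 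In the basis $(\beta,\alpha)$ this reads $M\beta=\beta$ and $M\alpha=\alpha-\Delta\Theta\,\beta$, that is $m=-\Delta\Theta$, which is precisely the claimed relation $-m=\Delta\Theta$.

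The main obstacle I anticipate is the bookkeeping that isolates the multivaluedness in $\Theta$: one must justify carefully that $T$ is genuinely single-valued around $\gamma$ while the mod-$\mathbb Z$ ambiguity in $\Theta$ is exactly what produces the lattice shear, and that the orientation and normalisation conventions (the sign of the $J$-direction and the period $2\pi$) are fixed consistently so that the sign comes out as $m=-\Delta\Theta$ rather than $+\Delta\Theta$. A secondary point to verify is the primitivity of $\alpha$, i.e. that $T$ is exactly the first return time and not a positive multiple of it, which is what guarantees that $(\beta,\alpha)$ is a genuine basis rather than spanning a finite-index sublattice.
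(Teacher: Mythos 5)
Your argument is correct. Note that the paper itself gives no proof of this theorem: it is a survey, and the statement is quoted with a citation to the book of Cushman and Bates; the discussion that follows it in the paper only points to references. Your period-lattice argument is, however, essentially the proof found in that cited literature: trivialise the fiber as $\mathbb R^2/\Lambda(f)$ via the commuting flows, observe that $(0,2\pi)\in\Lambda(f)$ (the circle orbit $\beta$) and $(T,-2\pi\Theta)\in\Lambda(f)$ (the first-return cycle $\alpha$), use minimality of the first return time to get a genuine basis $(\beta,\alpha)$, and read off the monodromy from the fact that $T$ and the lattice vary continuously while $\Theta$ is only defined modulo $\mathbb Z$, so that traversing $\gamma$ sends $\alpha$ to $\alpha-\Delta\Theta\,\beta$. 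The one hypothesis you use silently is that the circle action is free on the tori over $\gamma$ (so that the kernel of the projection is exactly $(0,2\pi)\mathbb Z$ and $\Theta$ is defined modulo $\mathbb Z$ rather than modulo $1/k$); this is implicit in the theorem's setting as well, and when it fails one is in the territory of fractional monodromy discussed later in the paper. Your two flagged concerns (single-valuedness of $T$ and primitivity of $\alpha$) are both resolved exactly as you indicate, so there is no gap.
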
 

We note that this theorem can be used as a powerful analytic  tool for the computation of monodromy in specific examples of integrable systems with a circle action. We refer
to \cite{Vu-Ngoc2000, Cushman2015, Efstathiou2017} for details. For another dynamical manifestation of monodromy,
see \cite{Delos2008}.

\subsection{The spherical pendulum}

We now come back to the case of the spherical pendulum and prove that the monodromy matrix of this system is given by Eq.~\ref{eq/monodromy_matrix}. We shall mainly focus on a topological idea which goes back to R. Cushman and F. Takens and which has been developed in the works
\cite{Martynchuk2020, Martynchuk2017, EfstathiouMartynchuk2017}.

We recall that the spherical pendulum is a mechanical Hamiltonian system that
describes the motion of a particle moving on the sphere 
\begin{align*}
S^2 = \{ (x,y,z) \in \mathbb R^3 \colon x^2 + y^2 + z^2 = 1 \}
\end{align*}
in the linear gravitational potential $V(x,y,z) = z.$
The phase space is $T^{*}S^2$ with the standard symplectic structure. The Hamiltonian is given by  
$$H = \frac{1}{2}(p_x^2+p_y^2+p_z^2) + V(x,y,z)$$ 
the total energy of the
pendulum. Since (the component of) the angular momentum $J = xp_y - yp_x$ is conserved,
the system is Liouville integrable.
The bifurcation diagram of the  energy-momentum map 
$$F = (H,J) \colon T^{*}S^2 \to \mathbb R^2,$$ that is, the set of the critical values of this map, is shown in Fig.~\ref{fig/BD_spherical_pendulum}. 

Consider the closed path $\gamma$ around the isolated critical value; see Fig.~\ref{fig/BD_spherical_pendulum}.  It was shown by Duistermaat in \cite{Duistermaat1980} using an analytic argument that the monodromy along $\gamma$ is given by the matrix 
\begin{equation} \label{eq/mon_p}
 M_\gamma = \begin{pmatrix} 1 & 1 \\ 0 & 1\end{pmatrix}.
 \end{equation}
 \begin{remark}
Duistermaat's proof is based on the computation of the action coordinates. To be more specific, observe that for the spherical pendulum, there are `natural' actions coming from a separation of the system in spherical coordinates. One of these actions is simply given by the function $J$; it is globally defined on the phase space  $T^{*}S^2$. The other one is an elliptic integral. One can deduce the monodromy from the derivatives of the second action when $J$ approaches zero; see \cite{Duistermaat1980} for details. We note that this kind of approach can be used more generally; it reduces the computation of monodromy to studying certain limits of elliptic integrals. 
\end{remark}

We note that the above result can  directly be obtained from the geometric monodromy theorem, Theorem~\ref{theorem/geometric_monodromy_theorem0}. Indeed, it can be shown that the isolated critical value is a focus-focus singularity of complexity $1$ (there is one and only one unstable equilibrium of the pendulum).

Below, following  the work \cite{Martynchuk2020}, we shall give a different proof of Eq.~\ref{eq/mon_p}, without computing the action coordinates or invoking the geometric monodromy theorem, but using only topological ideas.
 
The first step, is to observe that $J$ generates a Hamiltonian circle action on $T^{*}S^2$. It follows that any orbit of this action
on $F^{-1}(\gamma(0))$ can be transported along $\gamma$. Let $(a,b)$ be a basis of $H_1(F^{-1}(\gamma(0)))$, where $b$ is given by the homology class of such an orbit. Then the corresponding Hamiltonian
monodromy matrix along $\gamma$ is given by
\begin{equation*}
 M_\gamma = \begin{pmatrix} 1 & m_\gamma \\ 0 & 1\end{pmatrix}
 \end{equation*}
 for some integer $m_\gamma$. We now prove that the integer $m_\gamma \ne 0$; this argument is due to R. Cushman.

\begin{proof}
Observe that the points 
$$P_{min} = \{p = 0, z = -1\} \ \mbox{ and } \ P_{c} = \{p = 0, z = 1\}$$ are the only critical points of $H$, and they are non-degenerate. We have $H(P_{min}) = -1$ and $H(P_{c}) =1$. From the Morse lemma, for small $\varepsilon > 0$ ($\varepsilon$ should be less than $2$), the manifold $H^{-1}(1 - \varepsilon)$ is diffeomorphic to the $3$-sphere $S^3$. On the other hand,  it can be shown that $H^{-1}(1 + \varepsilon)$
is diffeomorphic to the unit cotangent bundle $T^{*}_1S^2$. It follows readily that $m_\gamma \ne 0$, for otherwise
the manifolds $F^{-1}(\gamma_1)$ and $F^{-1}(\gamma_2)$,
where $\gamma_1$ and $\gamma_2$ are the curves shown in Fig.~\ref{fig/ct}, would be diffeomorphic. This is not the case since $F^{-1}(\gamma_1)$ and $F^{-1}(\gamma_2)$ are isotopic to 
$H^{-1}(1 - \varepsilon)$ and $H^{-1}(1 + \varepsilon)$, respectively.
\end{proof}

\begin{figure}[ht]
\begin{center}
\includegraphics[width=0.98\linewidth]{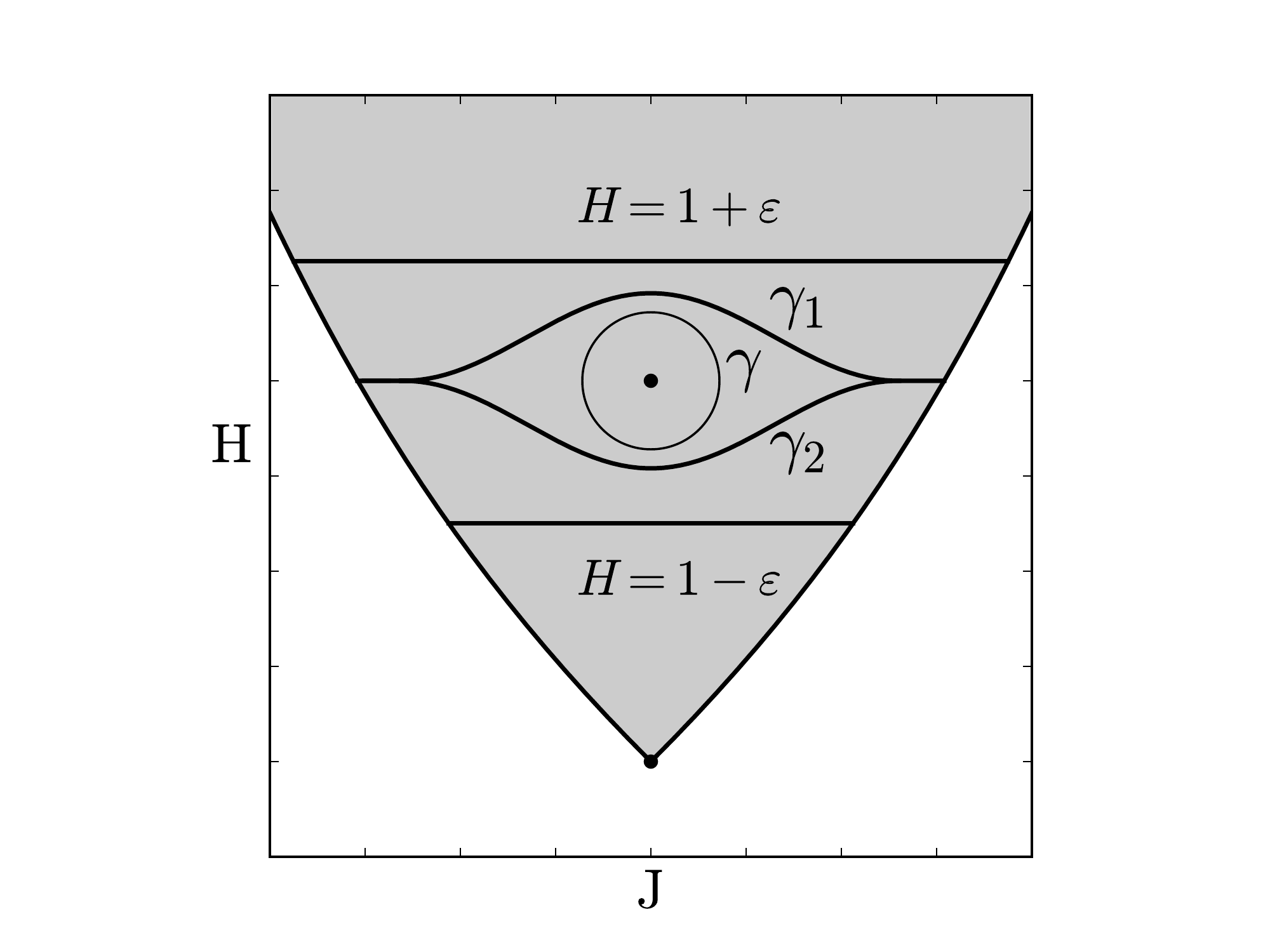}
\caption{Bifurcation diagram for the spherical pendulum, the energy levels, the curves $\gamma_1$ and $\gamma_2$, and the loop $\gamma$ around the focus-focus singularity. The figure is taken from \cite{Martynchuk2020}. }
\label{fig/ct}
\end{center}
\end{figure}

The next step was made by Floris Takens \cite{Takens2010}, who proposed the idea of using Chern numbers of energy hyper-surfaces 
and Morse theory for the computation of monodromy.
More specifically, he observed that  in integrable systems with a Hamiltonian circle action (in particular, in the spherical pendulum), the Chern number of energy hyper-surfaces changes when the energy passes a simple non-degenerate critical value of the Hamiltonian function:

\begin{theorem} \textup{(Takens's index theorem \cite{Takens2010})} \label{theorem/Takens}
Let $H$ be a proper Morse function on an oriented $4$-manifold.
 Assume that $H$ is invariant under a circle action that is free outside the critical points. Let $h_c$ be a critical value of $H$ containing exactly one critical point. Then the Chern numbers 
 of the nearby levels satisfy
 $$c(h_c+\varepsilon) = c(h_c-\varepsilon) 
 \pm 1.
 $$
 Here the sign is plus if the circle action defines the anti-Hopf fibration near the critical point and minus for the Hopf fibration. 
 
\end{theorem}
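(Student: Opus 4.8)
The plan is to compute the Chern number via a single invariant connection and to localize its variation across $h_c$ to the fixed point $x_0$. First I would observe that, since $H$ is invariant under the circle action and the action is free off the critical set of $H$, the unique critical point $x_0$ on the level $h_c$ is itself a fixed point of the action, and for $\varepsilon$ small it is the only non-free point in the slab $W = H^{-1}([h_c-\varepsilon,h_c+\varepsilon])$ (no other critical value lies in the interval, as $H$ is proper and Morse). Hence, writing $M^{\circ} = M \setminus \{x_0\}$, the action is free on $W \cap M^{\circ}$, so $W \cap M^{\circ} \to N := (W \cap M^{\circ})/S^1$ is a principal $S^1$-bundle, and each regular level $P_h = H^{-1}(h)$ with $h \neq h_c$ descends to a closed surface $B_h \subset N$ over which $P_h \to B_h$ is the restriction of this bundle, with Euler number $c(h)$.

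Next I would fix an $S^1$-invariant Riemannian metric and an $S^1$-invariant principal connection on $M^{\circ}$ (both by averaging over the compact group). Its curvature $F$ is an invariant closed $2$-form descending to a closed $2$-form $\tfrac{1}{2\pi}F$ on $N$, and by Chern–Weil $c(h) = \tfrac{1}{2\pi}\int_{B_h} F$ for every regular $h$. The point of using one global $F$ is that the jump across $h_c$ becomes a flux difference. Excising a small invariant ball $B_\delta(x_0)$, whose boundary is an invariant sphere $S^3_\delta$ with quotient link $S^2_\delta = S^3_\delta/S^1 \subset N$, I would apply Stokes on the smooth compact region $R = (W \setminus \mathrm{int}\,B_\delta(x_0))/S^1$. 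Since $dF = 0$ and $\partial R = B_{h_c+\varepsilon} - B_{h_c-\varepsilon} - S^2_\delta$ with compatible orientations, this gives
\[
c(h_c+\varepsilon) - c(h_c-\varepsilon) = \frac{1}{2\pi}\int_{S^2_\delta} F,
\]
so the entire variation equals the Euler number of the circle bundle $S^3_\delta \to S^2_\delta$.

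Finally I would identify this local bundle. By Bochner's linearization theorem the action near $x_0$ is equivariantly conjugate to a linear circle action on $T_{x_0}M \cong \mathbb{C}^2$, and freeness on $S^3_\delta \setminus \{x_0\}$ forces the weights to be $(\pm 1,\pm 1)$; up to orientation these are the Hopf action $(z_1,z_2)\mapsto(e^{i\theta}z_1,e^{i\theta}z_2)$ or the anti-Hopf action $(z_1,z_2)\mapsto(e^{i\theta}z_1,e^{-i\theta}z_2)$. Then $S^3_\delta \to S^2_\delta$ is the Hopf, respectively anti-Hopf, fibration, whose Euler number, computed with the orientations fixed above, equals $-1$ in the Hopf case and $+1$ in the anti-Hopf case. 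Substituting into the displayed identity yields $c(h_c+\varepsilon) = c(h_c-\varepsilon) \pm 1$, with $+$ for the anti-Hopf fibration and $-$ for the Hopf fibration, as claimed. Note that the Morse index of $x_0$ never enters: only the type of the linearized action controls the sign.

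The main obstacle I anticipate is the orientation and sign bookkeeping. One must orient $M$, coorient the level surfaces $B_h$ by increasing $H$, orient the link $S^2_\delta$ as a boundary, and orient the fibre, all so that the Chern–Weil normalisation $\tfrac{1}{2\pi}\int F$, the boundary orientation in Stokes, and the Euler number of the (anti-)Hopf fibration share one convention; only then does the local flux come out as $+1$ \emph{precisely} in the anti-Hopf case. By contrast, the conical point $\bar x_0$ causes no difficulty, since it is removed before quotienting and $R$ is a genuine smooth manifold-with-boundary; the remaining routine verifications are that $\tfrac{1}{2\pi}\int_{B_h}F$ indeed equals the Chern number of $P_h \to B_h$ and that the excised link is exactly $S^3_\delta/S^1$, both of which the linearization step makes transparent.
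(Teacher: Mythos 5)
This survey states Takens's index theorem without proof, citing \cite{Takens2010} (a detailed treatment is in the companion work \cite{Martynchuk2020}), so there is no in-paper argument to compare against line by line; judged on its own, your proof is correct and is essentially the standard argument from that literature. The structure is sound at every step: the single critical point is indeed a fixed point (its orbit consists of critical points of the invariant function $H$ on the level $h_c$, and there is only one); an invariant principal connection turns both Chern numbers into curvature fluxes $\frac{1}{2\pi}\int_{B_h}F$ of one global closed form; Stokes' theorem on the quotient of the slab minus an invariant ball localizes the jump to the Euler number of the link bundle $S^3_\delta \to S^2_\delta$; and Bochner linearization plus freeness off the critical point rules out weights of modulus $\ge 2$ (they would create $\mathbb{Z}_{|m|}$ isotropy on a coordinate axis), leaving exactly the Hopf or anti-Hopf model with local Euler number $-1$ or $+1$. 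The only genuinely delicate point is the one you flag yourself: the orientation conventions (levels cooriented by increasing $H$, the link oriented as boundary of the excised ball, fibers oriented by the action) must be used consistently in the Stokes identity and in the computation of the Hopf/anti-Hopf Euler numbers, and with those conventions the anti-Hopf case does give $+1$, matching the statement. Your closing observation that the Morse index of the critical point never enters, only the type of the linearized action, is likewise consistent with how the theorem is stated and applied in the paper.
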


For the spherical pendulum, the circle action comes from rotational symmetry. The Chern number $c(1+\varepsilon)$
of the energy level $H^{-1}(1 + \varepsilon) \simeq T^{*}_1S^2$ is equal to $2$, and the Chern number
$c(1-\varepsilon)$
of $H^{-1}(1 - \varepsilon) \simeq S^3$ is equal to $1$. Thus, to conclude the proof in this case, it is left to show that 
$m_\gamma = c(1+\varepsilon) - c(1-\varepsilon)$. This last step was made in
\cite{Martynchuk2020}, where it was observed that the monodromy of a two-degree of freedom system with a circle action is given by the difference of the Chern numbers of appropriately chosen energy levels. For the spherical pendulum,
the proof is also based on Fig~\ref{fig/ct}. First, one observes that the Chern number  of 
$F^{-1}(\gamma_1)$ equals to $c(1+\varepsilon) $ and the Chern number  of $F^{-1}(\gamma_1)$ 
to $c(1-\varepsilon).$ The manifolds $F^{-1}(\gamma_1)$ are obtained from solid tori
by gluing the boundary tori via 
$$
\begin{pmatrix}
 a_{-}  \\
 b_{-} 
\end{pmatrix} = \begin{pmatrix}
 1 & c_i \\
 0 & 1
\end{pmatrix} \begin{pmatrix}
 a_{+} \\
 b_{+}
\end{pmatrix},
$$
where $c_i$ is the Chern number of $F^{-1}(\gamma_i)$. (We note that this representation using gluing matrices is a 
very special case of Fomenko-Zieschang theory 
\cite{Fomenko1990, Bolsinov2004}.)
 It follows that the monodromy matrix along $\gamma$ is given by the  product
$$
M_\gamma = \begin{pmatrix}
 1 & c_1 \\
 0 & 1
\end{pmatrix} \begin{pmatrix}
 1 & c_2 \\
 0 & 1
\end{pmatrix}^{-1}.
$$
Since $c_1 - c_2 =  
 1,$ we conclude that the monodromy matrix
$$
M_\gamma = \begin{pmatrix}
 1 & 1 \\
 0 & 1
\end{pmatrix}.
$$

We note that the above Morse-theoretic approach works for more general two-degree of freedom systems that have a global circle action. In particular, one can prove the geometric monodromy theorem using this point of view. 

\subsection{Several remarks}
 There are various cases (systems with many degrees of freedom, non-compact energy levels) when Morse theory cannot be used directly for the computation of monodromy. Nonetheless, as was shown in  \cite{EfstathiouMartynchuk2017, Martynchuk2017}, even in such cases, one can effectively compute the monodromy for integrable systems that are invariant under a global circle action (or a complexity 1 torus action). 
  
The first observation, which is the starting point of the work \cite{EfstathiouMartynchuk2017},  is that in the case of a global circle action, the monodromy of a torus bundle $F \colon F^{-1}(\gamma) \to \gamma$ is given by the Chern number of $F^{-1}(\gamma)$; the Chern number comes from the circle action. More specifically, there is the following result.

\begin{theorem} \textup{(\cite[\S 4.3.2]{Bolsinov2004}, \cite{EfstathiouMartynchuk2017})} \label{theorem/Bolsinov_Fomenko_Zieschang}
Assume that the energy-momentum map $F$ is proper and invariant under a Hamiltonian circle action. 
Let $\gamma \subset \textup{image}(F)$ be  a simple closed curve in the set of the regular values of the map $F$.  Then
 the Hamiltonian
monodromy of the $2$-torus bundle $F \colon F^{-1}(\gamma) \to \gamma$ is given by
$$
\begin{pmatrix}
 1 & m \\
 0 & 1
\end{pmatrix} \in \mathrm{SL}(2,\mathbb Z),
$$
where $m$ is the Chern number of the principal circle bundle $\rho \colon F^{-1}(\gamma) \to  F^{-1}(\gamma) / \mathbb S^1$, which is defined by reducing the circle action.
\end{theorem}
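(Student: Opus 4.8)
The plan is to exploit the global circle action twice: first to force the monodromy into the asserted triangular shape, and then to recognise its single off-diagonal entry as the Chern number by realising $E := F^{-1}(\gamma)$ as a mapping torus on which the torus fibration $F\colon E \to \gamma$ and the principal fibration $\rho\colon E \to B$ are encoded by one and the same gluing datum. Thus the two invariants --- the monodromy integer $m_\gamma$ and the Chern number $m$ --- will be extracted from identical information.

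First I would establish the triangular form, exactly as in the spherical pendulum discussion. Since the action is global and, by the principal-bundle hypothesis, free on $E$, each regular fibre $T_f = F^{-1}(f)$ is foliated by orbits, and the orbit through a point represents a nowhere-zero class $b \in H_1(T_f,\mathbb{Z})$ depending continuously on $f \in \gamma$. Completing $b$ to a basis $(a,b)$ of $H_1(F^{-1}(\gamma(0)),\mathbb{Z})$, I note that $b$ is transported to itself as $f$ runs once around $\gamma$, because the action is defined globally on $E$; hence the monodromy fixes $b$. An orientation-preserving automorphism of $\mathbb{Z}^2$ fixing $b$ must send $a \mapsto a + m_\gamma b$ and $b \mapsto b$, which in the basis $(a,b)$ is precisely $\begin{pmatrix} 1 & m_\gamma \\ 0 & 1 \end{pmatrix}$ for a well-defined integer $m_\gamma$. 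It then remains only to prove $m_\gamma = m$.

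For the identification I would realise $E$ concretely. Cutting $\gamma$ at $\gamma(0)$ trivialises the torus bundle over the resulting interval, so $E$ is the mapping torus of the monodromy diffeomorphism and may be modelled as $\mathbb{R}^3/\Gamma$ with coordinates $(x,y,t)$, where $x$ runs along the orbit $b$, $y$ along $a$, and $t$ along $\gamma$, and $\Gamma$ is generated by the unit $x$- and $y$-translations together with $T_t\colon (x,y,t)\mapsto(x+m_\gamma y,\,y,\,t+1)$ encoding $a \mapsto a + m_\gamma b$. The circle action is translation in $x$, which descends to $E$ precisely because the monodromy fixes $b$; reducing by it annihilates $x$ and leaves $B = E/S^1 = T^2_{(y,t)}$, so that $\rho\colon E \to B$ is the $x$-circle bundle over this torus. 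A direct commutator computation, $T_t T_y T_t^{-1} T_y^{-1} = T_x^{\,m_\gamma}$, exhibits $\Gamma$ as the integer Heisenberg-type lattice whose nilmanifold is the principal $S^1$-bundle over $T^2$ of Euler number $m_\gamma$; hence the Chern number of $\rho$ is $m_\gamma$, giving $m = m_\gamma$ as claimed. Equivalently, one reads $m$ as the obstruction to a section of $\rho$: transporting a section along the $a$-circle of $B$ around the $\gamma$-direction fails to close up by exactly $m$ orbits, which is literally the shear $a \mapsto a + m b$.

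The main obstacle is to guarantee that this reduction to the model $\mathbb{R}^3/\Gamma$ is compatible with the given circle action, not merely as an abstract isomorphism of torus bundles over $\gamma$. Concretely, one must verify that the trivialisation used to build the mapping torus can be chosen $S^1$-equivariantly, so that the distinguished orbit class $b$ really becomes the model $x$-direction and the reduced bundle $\rho$ is carried onto the model $x$-circle bundle over $T^2_{(y,t)}$; this is where freeness and the global nature of the action are indispensable, and where the orientation conventions that fix the sign of $m$ against that of the Chern number must be pinned down once and for all. Once this equivariant identification is secured, the Heisenberg commutator computation is routine and the theorem follows.
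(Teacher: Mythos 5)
Your overall strategy is viable, but first a point of reference: the paper itself does not prove this theorem --- it is quoted from \cite[\S 4.3.2]{Bolsinov2004} and \cite{EfstathiouMartynchuk2017} --- and the nearest proof-like material in the paper is the spherical pendulum computation, where the relevant $3$-manifolds are presented Fomenko--Zieschang style as solid tori glued by shears $\begin{pmatrix} 1 & c_i \\ 0 & 1\end{pmatrix}$ whose off-diagonal entries are Chern numbers, and the monodromy along $\gamma$ is read off as a product of such gluing matrices. Your proposal is the same clutching philosophy in a different package: cut once, form the mapping torus, and recognise the total space as the Heisenberg-type nilmanifold whose two fibration structures (torus bundle over $\gamma$, circle bundle over $T^2$) carry the two invariants. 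Your first step (the triangular form via continuity of the orbit class $b$) is exactly the standard argument and is fine, and the commutator computation $T_t T_y T_t^{-1} T_y^{-1} = T_x^{m_\gamma}$ in the model is correct.

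However, the step you flag is, as written, a genuine gap rather than a routine verification, and it sits precisely where the two invariants get identified: you construct the model $\mathbb R^3/\Gamma$ out of the monodromy integer $m_\gamma$ by a trivialisation that has nothing a priori to do with the circle action, and then you need the given action to match the model $x$-translation under that identification; until this is done your argument only proves the conditional statement. The clean repair is to reverse the logical order, letting the action produce the trivialisation instead of adjusting a trivialisation to the action. Cut $\gamma$ at a point to get an arc $I$; the quotient $B_I = F^{-1}(I)/\mathbb S^1$ is an annulus, and every principal circle bundle over an annulus is trivial, so $F^{-1}(I) \cong S^1_x \times S^1_y \times I$ \emph{equivariantly}; this product structure is automatically compatible with the fibration over $I$, because $F$ is $\mathbb S^1$-invariant and hence its fibers are the $\rho$-preimages of the fibers of $B_I \to I$. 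The manifold $E = F^{-1}(\gamma)$ is then recovered from a single gluing diffeomorphism of $T^2$ that commutes with $x$-translation, hence is isotopic to $(x,y)\mapsto (x+ky,\,y)$; this one integer $k$ is simultaneously the monodromy shear (its action on $H_1$) and the clutching degree of the circle bundle $\rho$ over $B \cong T^2$, i.e.\ the Chern number $m$ --- and your Heisenberg presentation now appears with the equivariance built in, no further verification needed. (If you only want the magnitude, the torsion subgroup of $H_1(E)$ is $\mathbb Z/|m_\gamma|$ from the torus-bundle presentation and $\mathbb Z/|m|$ from the circle-bundle presentation, giving $|m| = |m_\gamma|$ for free; the sign, as you note, is purely a matter of fixing orientation conventions for $a$, $b$, $\gamma$ and the action consistently.)
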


In the case when the curve $\gamma$ bounds a disk $D \subset \textup{image}(F)$,  the Chern number $m$ can be computed from the singularities of the circle action that project into $D$. Specifically,
there is the following result.

 \begin{theorem} \textup{(\cite{EfstathiouMartynchuk2017})}  \label{theorem/EM}
Let $F$ and $\gamma$ be as in Theorem~\ref{theorem/Bolsinov_Fomenko_Zieschang}. Assume that $\gamma$ bounds a $2$-disk $D \subset \textup{image}(F)$ and that
the circle action is free in $F^{-1}(D)$ outside isolated fixed points. Then 
the Hamiltonian monodromy of $F \colon F^{-1}(\gamma) \to \gamma$ is given by 
the number of positive\footnote{The sign of a fixed point depends on whether the circle action defines the anti-Hopf or the Hopf fibration near this point.} fixed points minus the number of negative fixed points in $F^{-1}(D)$. 
\end{theorem}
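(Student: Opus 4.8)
The plan is to reduce the statement to a Chern-number computation by means of Theorem~\ref{theorem/Bolsinov_Fomenko_Zieschang}, and then to \emph{localise} that Chern number at the fixed points of the circle action through a standard excision-and-bordism argument. First I would invoke Theorem~\ref{theorem/Bolsinov_Fomenko_Zieschang}: since $F$ is proper, invariant under the Hamiltonian circle action, and $\gamma$ is a simple closed curve of regular values, the monodromy along $\gamma$ equals $\bigl(\begin{smallmatrix} 1 & m \\ 0 & 1 \end{smallmatrix}\bigr)$, where $m$ is the Chern number of the principal circle bundle $\rho \colon F^{-1}(\gamma) \to F^{-1}(\gamma)/S^1$. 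Note that $F^{-1}(\gamma)$ is a closed oriented $3$-manifold on which the action is free, so the quotient is a closed oriented surface and $m$ is well defined. It therefore suffices to prove that $m = n_+ - n_-$, the number of positive minus the number of negative fixed points in $F^{-1}(D)$.

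Next I would set $W = F^{-1}(D)$. As $F$ is proper and $\gamma = \partial D$, this is a compact $4$-manifold, oriented by the symplectic form $\omega$, with boundary $\partial W = F^{-1}(\gamma)$; the $S^1$-action (being Hamiltonian) preserves this orientation and is free on $W$ away from the isolated fixed points $x_1, \dots, x_k$. Around each $x_i$ I would excise a small invariant open ball $B_i$. Using the local linear normal form of a symplectic circle action at an isolated fixed point (cf. Theorem~\ref{zungaction1}), the boundary $\partial B_i \cong S^3$ carries exactly the free action of the Hopf or anti-Hopf fibration, whose Chern number is $\mp 1$ or $\pm 1$; this is precisely the sign $\epsilon_i$ attached to the fixed point in the convention of Theorem~\ref{theorem/Takens}, with $\epsilon_i = +1$ for an anti-Hopf (positive) point and $\epsilon_i = -1$ for a Hopf (negative) point.

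The heart of the argument is then the following bordism identity. On the complement $W' = W \setminus \bigcup_i B_i$ the action is free, so $W' \to X := W'/S^1$ is a principal circle bundle over a compact oriented $3$-manifold with boundary $\partial X = (F^{-1}(\gamma)/S^1) \sqcup \bigsqcup_i S^2$. From the long exact sequence of the pair $(X,\partial X)$ one has $i_*[\partial X] = 0$ in $H_2(X)$, where $i \colon \partial X \hookrightarrow X$; pairing the class $c_1(\rho') \in H^2(X)$ of this bundle against the vanishing class $\sum_j (i_j)_*[\partial_j X]$ and using $i_j^* c_1(\rho') = c_1(\rho'|_{\partial_j X})$, I would conclude that the \emph{sum of the Chern numbers over all boundary components of $X$ is zero}. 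Keeping track of the orientation reversal that excision induces on each $\partial B_i$, this relation rearranges to $m = \sum_i \epsilon_i = n_+ - n_-$, which is the claim.

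The hard part will be the orientation bookkeeping in this last step. One must verify (i) that each excised sphere $\partial B_i$ enters $\partial W'$ with the orientation opposite to its orientation as the boundary of $B_i$; (ii) that with this induced orientation the quotient $\partial B_i/S^1 = S^2$ is oriented so that the Hopf, respectively anti-Hopf, bundle genuinely has Chern number matching the Takens sign $\epsilon_i$; and (iii) that the boundary orientation that $X$ induces on $F^{-1}(\gamma)/S^1$ coincides with the one used to define $m$ in Theorem~\ref{theorem/Bolsinov_Fomenko_Zieschang}. Once these three sign conventions are pinned down, the vanishing of the total boundary Chern number yields the result immediately, with no further computation required.
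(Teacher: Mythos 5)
Your proposal is correct and follows essentially the same route as the paper: reduce the monodromy to the Chern number of the bundle $\rho \colon F^{-1}(\gamma) \to F^{-1}(\gamma)/\mathbb S^1$ via Theorem~\ref{theorem/Bolsinov_Fomenko_Zieschang}, and then localise that Chern number at the fixed points of the circle action lying in $F^{-1}(D)$. Your excision--bordism step (pairing $c_1$ against the null-homologous fundamental class of $\partial X$) is exactly the topological form of the localisation the paper invokes elsewhere --- it is the Stokes'-theorem argument used in the proof of Theorem~\ref{theorem/fmresonance} and the fixed-point formula $e(\partial M) = \sum_{k} 1/(m_k n_k)$ for Seifert fibrations --- and your Hopf/anti-Hopf sign convention agrees with that of Theorem~\ref{theorem/Takens}.
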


 We note that Theorems~\ref{theorem/Bolsinov_Fomenko_Zieschang} and \ref{theorem/EM} were generalized to a much more general setting of fractional monodromy and Seifert fibrations; see \cite{Martynchuk2017}. 
 Such a generalization allows one, in particular, to define monodromy for circle bundles over 2-dimensional surfaces of genus $g \ge 1$; in the standard case the genus  $g = 1$. We will come back to fractional monodromy and Seifert manifolds in 
 Section~\ref{sec/fractional_monodromy}.
  
  The works  \cite{EfstathiouMartynchuk2017, Martynchuk2017} essentially settle the monodromy question in the case when the 2 degree of freedom system admits a circle action (or, in the case of many degrees of freedom, a complexity 1 torus action). The case when no such action exists is much less understood. In view of the above Morse theory approach, the following problem seems natural. 
  
  \begin{problem}
 Is it possible to generalise Cushman-Takens approach to the case when there is no Hamiltonian circle action? 
 \end{problem}
 
We note that there are examples of integrable systems with focus-focus fibers and no global circle action; see for example \cite{Waalkens2004,  Leung2003, Tarama2012}. The Hamiltonian monodromy around several such fibers does not have to be of the from 
\begin{equation*} \label{eq/usual_monodromy_matrix}
M_\gamma = \begin{pmatrix}
 1 & k  \\
0 & 1 
\end{pmatrix}.
\end{equation*}
In fact, it can be any $\textup{SL}(2,\mathbb Z)$ matrix (this follows from properties of the group $\textup{SL}(2,\mathbb Z)$); see \cite{Cushman2002, Cushman2002b}. 

In this connection, we mention the class of integrable geodesic flows on \textit{Sol}-manifolds that was constructed in \cite{Bolsinov2006b}. This class comes from a deep problem of non-integrability in classical mechanics  \cite{Bolsinov2000, Kozlov1979, Kozlov1983}. In this case, the monodromy is associated to a degenerate singular fiber, and a $2\times2$ block of the Hamiltonian monodromy matrix is given by an integer hyperbolic matrix. One particular example is 
$$
M_\gamma = \begin{pmatrix}
 2 & 1 & 0 \\
 1 & 1&0 \\
 0 & 0 & 1   
\end{pmatrix}.
$$
We note that cases of such general $\textup{SL}(n,\mathbb Z)$ monodromy matrices (in $n = 2$ or 3 degree of freedom  systems) are not yet understood and new examples are currently missing.
 \begin{problem} \textup{(A. Bolsinov)}
 Construct new examples of integrable systems with a prescribed  monodromy around a (possibly degenerate) singular fiber.  \end{problem}

\subsection{Monodromy in nearly integrable systems}

Let $F \colon M \to \mathbb R^n$ be a proper integral map of an integrable Hamiltonian system on $M$. Assume that the Hamiltonian $H$ is real-analytic and Kolmogorov nondegenerate. Then, according to the Kolmogorov-Arnol'd-Moser theory \cite{Kolmogorov1954, Arnold1963, Moser1967}, there are invariant Liouville tori $F^{-1}(f)$, forming a set of measure $1 - O(\sqrt{\varepsilon})$,  which survive small perturbations $H + \varepsilon P$ of $H$.
This leads to the following natural question, which was addressed in \cite{Rink2004, Broer2007a, Broer2007}, cf. \cite{Zung1996}:  can one extend geometric invariants  of integrable systems (like monodromy) to the nearly-integrable case?
It turns out that this is indeed possible, at least in the topological setting. More specifically, one can `smoothly interpolate' the invariant tori given by the KAM theorem in a global way. Such an interpolation results in a torus bundle for the perturbed system which is diffeomorphic to the original torus bundle associated to $H$. This  implies that the topology of the original torus bundle, given by the non-singular part of $F$, is  preserved under the perturbation. In particular,   Hamiltonian monodromy can be extended to nearly-integrable systems. Below we discuss this idea in more detail, following mainly \cite{Broer2007a}.

Consider the product $D^n \times \mathbb T^n$ of an $n$-disk and an $n$-torus with the standard symplectic structure $dI \wedge d\varphi$. Suppose that $H$ is a non-degenerate Hamiltonian of the integral map $F = \Pr \colon D^n \times \mathbb T^n \to D^n$. This means that the frequency map 
$$\omega_i = \dfrac{\partial H}{\partial I_i} \colon D^n \to \mathbb R^n
$$
is a diffeomorphism onto its image. For $\tau \ge  n$ and $\gamma > 0$, let 
$$D_{\tau, \gamma} = \{ \omega \in \mathbb R^n \mid \langle \omega, k \rangle \ \ge \gamma |k|^{-\tau}, \mbox{ for all } k \in \mathbb Z^n \setminus \{0\}\}$$ be 
the set of Diophantine frequency vectors. We also let
$$A_{\tau, \gamma} = \{I \in D^n \mid \omega(I) \in D_{\tau, \gamma} \ \mbox{ and } \ \textup{dist}(\omega(I), \partial \omega(D^n)) < \gamma\}.$$ 

A main ingredient in the proofs of the monodromy invariance under perturbations is the following (semi-)local theorem of P{\"o}schel \cite{Poschel1982}.

\begin{theorem} \label{theorem/kamlocal} \textup{(Semi-local KAM theorem \cite{Poschel1982}).}
Consider the product $D^n \times \mathbb T^n$ with the standard symplectic structure. Suppose that $H$ is a non-degenerate integral of $F = \Pr \colon D^n \times \mathbb T^n \to D^n$. Let $P$ be a smooth function on $D^n \times \mathbb T^n$. 
Then for all sufficiently small $\varepsilon$, there exists a diffeomorphism $\Phi_{\varepsilon} \colon D^n \times \mathbb T^n \to D^n \times \mathbb T^n$ such that 

(i) $\Phi_{\varepsilon}$ is close to the identity;

(ii) the restriction of $\Phi_{\varepsilon}$ to $A_{\tau, \gamma} \times \mathbb T^n$  conjugates the Hamiltonian flows of $H$ and $H + \varepsilon P$.

\end{theorem}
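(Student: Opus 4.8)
The plan is to run a classical, quadratically convergent KAM iteration in the frequency-preserving formulation, and then to promote the resulting conjugacy --- which is initially defined only over the Cantor-type set $A_{\tau,\gamma}$ --- to a global near-identity diffeomorphism by means of a Whitney extension. The guiding observation is that on $A_{\tau,\gamma}\times\mathbb{T}^n$ the unperturbed flow of $H$ is simply the linear flow with frequency $\omega(I)$, so the sought $\Phi_\varepsilon$ is nothing but a family of embeddings, one for each Diophantine label $I$, carrying the linear torus $\{I\}\times\mathbb{T}^n$ onto an invariant torus of $H+\varepsilon P$ on which the motion is again linear with the \emph{same} frequency $\omega(I)$. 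Because $H$ is Kolmogorov non-degenerate, $\omega$ is a diffeomorphism onto its image, so the action label $I$ and its frequency $\omega(I)$ are interchangeable, and the Diophantine set $A_{\tau,\gamma}$ corresponds to a fixed frequency set that can be held constant throughout the iteration.

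First I would set up a single KAM step. Writing the perturbed Hamiltonian as $H(I)+\varepsilon P(I,\varphi)$, I seek a canonical change of variables, generated by $S(I',\varphi)=I'\cdot\varphi+\varepsilon s(I',\varphi)$, that removes the angle dependence of the perturbation to first order in $\varepsilon$. Expanding $H(I'+\varepsilon\,\partial_\varphi s)$ to first order forces the homological equation $\partial_\omega s = \bar P - P$, where $\partial_\omega=\sum_i \omega_i\,\partial_{\varphi_i}$ and $\bar P$ is the $\varphi$-average of $P$; in Fourier coefficients this reads $\hat s_k = \hat P_k/(i\langle\omega,k\rangle)$ for $k\neq 0$. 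The Diophantine condition $|\langle\omega,k\rangle|\ge \gamma|k|^{-\tau}$ defining $D_{\tau,\gamma}$ is exactly what makes this step solvable, at the price of a polynomial factor $|k|^{\tau}$ that, on analytic strips, translates into a loss of analyticity width. After the transformation the new perturbation is quadratic, of size $O(\varepsilon^2)$ modulo this width loss.

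Next I would iterate. Choosing a sequence of shrinking analyticity widths $s_j\downarrow s_\infty>0$ and perturbation sizes $\varepsilon_j$ obeying a recursion of the type $\varepsilon_{j+1}\le C\,(s_j-s_{j+1})^{-\sigma}\,\varepsilon_j^2$, one verifies that the super-exponential (quadratic) decay of $\varepsilon_j$ dominates the polynomial blow-up, so that the composition of the successive symplectic maps converges to a limiting transformation defined over the frequencies in $D_{\tau,\gamma}$, that is, over $A_{\tau,\gamma}$ in the $I$ variable, and Whitney-smooth in $I$. Since $P$ is only $C^\infty$, I would first replace it by a sequence of analytic approximants (analytic smoothing), apply the analytic scheme at each scale, and pass to the limit; alternatively one runs the entire scheme with Nash--Moser smoothing operators directly in the smooth category to absorb the loss of derivatives. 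The outcome is a map conjugating the two flows on $A_{\tau,\gamma}\times\mathbb{T}^n$.

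Finally, since this conjugacy so far lives only over the nowhere-dense set $A_{\tau,\gamma}$, I would invoke the Whitney extension theorem to extend it to a genuine $C^\infty$ map on all of $D^n\times\mathbb{T}^n$; as all its jets are close to those of the identity, the extension is a diffeomorphism $\Phi_\varepsilon$ close to the identity, which yields (i), while its restriction to $A_{\tau,\gamma}\times\mathbb{T}^n$ is the conjugacy built above, which yields (ii). The main obstacle is the technical heart of the two middle steps: controlling the small divisors through the Diophantine bound while closing the quadratically convergent loop --- that is, balancing at every step the loss of analyticity width (or of derivatives, in the Nash--Moser version) against the quadratic gain, and simultaneously tracking the drift of the action variable so that the frequency labelling remains consistent. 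It is precisely here that the non-degeneracy of $\omega$ is used in an essential way.
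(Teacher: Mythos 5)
The paper states this theorem only as a cited result of P\"oschel \cite{Poschel1982} and contains no proof of its own; your sketch reconstructs precisely the strategy of that reference: a frequency-preserving, quadratically convergent KAM iteration (with analytic smoothing to handle the $C^\infty$ perturbation), yielding a Whitney-smooth conjugacy over the Diophantine set $A_{\tau,\gamma}$, which is then promoted to a global near-identity diffeomorphism of $D^n\times\mathbb T^n$ via the Whitney extension theorem. This is essentially the same approach as the intended (cited) proof, so there is nothing to correct.
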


We note that in integrable systems, the product $D^n \times \mathbb T^n$ appearing in Theorem~\ref{theorem/kamlocal} comes from semi-local action-angle coordinates. This is why this theorem is semi-local. In \cite{Broer2007a}, by using a partition of unity and a convexity argument, this result was extended to the global setting of (possibly non-trivial) Lagrangian torus bundles.  More specifically, there is the following result.

\begin{theorem} \textup{(\cite{Broer2007a})}
Let $F \colon M \to \mathbb R^n$ be the integral map of an integrable  system such that all of the fibers $F^{-1}(f)$ are compact and connected. Suppose that $H$ is a non-degenerate integral of $F$, and let $P$ be a smooth function on $M$. Finally, consider the non-singular part of $F$ over a relatively compact set $R \subset \mathbb R^n$: the $n$-torus bundle 
$$F \colon F^{-1}(R) \to R.$$ 
Then for all sufficiently small $\varepsilon$, there exists a subset $R'_\varepsilon \subset R$ and a diffeomorphism $\Phi_{\varepsilon} \colon F^{-1}(R) \to F^{-1}(R)$ such that

(i) $\Phi_{\varepsilon}$ is close to the identity;

(ii) $R'_\varepsilon$ is nowhere dense in $\mathbb R^n$ and the measure of $R \setminus R'_\varepsilon$ tends to zero when $\varepsilon$ tends to zero;

(iii) the restriction of $\Phi_{\varepsilon}$ to $F^{-1}(R'_\varepsilon)$  conjugates the Hamiltonian flows of $H$ and $H + \varepsilon P$.

\end{theorem}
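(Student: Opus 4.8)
The plan is to globalise P\"oschel's semi-local result (Theorem~\ref{theorem/kamlocal}) by a cover-and-glue argument, exploiting that $R$ is relatively compact. First I would choose finitely many open sets $U_1,\dots,U_N$ covering $R$ (up to a boundary sliver that will be absorbed into the small bad set) over which the bundle $F\colon F^{-1}(R)\to R$ trivialises and admits semi-local action-angle coordinates, i.e.\ symplectomorphisms $\psi_\alpha\colon F^{-1}(U_\alpha)\to A_\alpha\times\mathbb T^n$ with $\Pr\circ\psi_\alpha=F$; these exist by the Arnol'd--Liouville theorem and compactness. Reading the pair $(H,P)$ through each $\psi_\alpha$, the hypotheses of Theorem~\ref{theorem/kamlocal} hold, so I obtain near-identity diffeomorphisms $\Phi^\alpha_\varepsilon$ conjugating the flows of $H$ and $H+\varepsilon P$ over the Diophantine set $A^\alpha_{\tau,\gamma}\times\mathbb T^n$, with $\gamma=\gamma(\varepsilon)\to 0$ chosen slowly enough (say $\gamma\sim\varepsilon^{1/4}$) that $\Phi^\alpha_\varepsilon$ stays close to the identity.

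The heart of the matter is that the local conjugacies need not agree on the overlaps $F^{-1}(U_\alpha\cap U_\beta)$, and the monodromy of the bundle forbids a single global action-angle chart, so there is no direct way to compare them. The key observation is that over a Diophantine torus both $\Phi^\alpha_\varepsilon$ and $\Phi^\beta_\varepsilon$ carry the \emph{same} perturbed flow to the \emph{same} unperturbed integrable flow; hence the transition map $\tau_{\alpha\beta}=(\Phi^\beta_\varepsilon)^{-1}\circ\Phi^\alpha_\varepsilon$ is fibre-preserving and commutes with the quasi-periodic flow. On a Diophantine torus that flow has a dense orbit, so any commuting diffeomorphism is an angle translation: $\tau_{\alpha\beta}$ is fibrewise $\varphi\mapsto\varphi+s_{\alpha\beta}(I)$, with $s_{\alpha\beta}$ close to $0$ because the $\Phi^\alpha_\varepsilon$ are. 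These shifts satisfy a cocycle relation, and being small they lift to $\mathbb R^n$-valued functions; it is here that the \emph{convexity} (linear structure) of the fibre $\mathbb R^n$ of the universal cover $\mathbb R^n\to\mathbb T^n$ is essential. I would then split the cocycle with a smooth partition of unity $\{\rho_\alpha\}$ subordinate to $\{U_\alpha\}$, writing $s_{\alpha\beta}=t_\alpha-t_\beta$ with $t_\alpha=\sum_\gamma\rho_\gamma\,s_{\alpha\gamma}$, and correct each $\Phi^\alpha_\varepsilon$ by the fibrewise translation $t_\alpha$. The corrected local maps agree on overlaps, so they assemble into a single global diffeomorphism $\Phi_\varepsilon$ of $F^{-1}(R)$; as a near-identity correction of near-identity maps it is again close to the identity, hence a diffeomorphism, giving (i).

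It remains to define $R'_\varepsilon$ and verify (ii)--(iii). I would take $R'_\varepsilon$ to be the set of base points whose tori are Diophantine in every chart containing them. On $F^{-1}(R'_\varepsilon)$ each correction $t_\alpha$ is a rigid translation commuting with the unperturbed flow, and since translations form a group the convex combination defining $t_\alpha$ is again such a translation; therefore the glued $\Phi_\varepsilon$ still conjugates the two flows there, which is (iii). For (ii), Kolmogorov nondegeneracy makes each frequency map $\omega_\alpha$ a diffeomorphism, so the standard estimate $\mathrm{meas}(D^n\setminus A^\alpha_{\tau,\gamma})=O(\gamma)$ transfers from frequency space to action space with a constant uniform over the finitely many charts; summing over $\alpha$ and including the absorbed boundary sliver gives $\mathrm{meas}(R\setminus R'_\varepsilon)=O(\gamma)\to 0$, while the Cantor-like structure of the Diophantine set makes $R'_\varepsilon$ nowhere dense.

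The main obstacle, to my mind, is exactly the gluing: establishing that the overlap discrepancies are \emph{pure} angle translations (so that they may be lifted and split) and that interpolating them via the partition of unity does not destroy the conjugacy on $R'_\varepsilon$. This works because each $s_{\alpha\beta}$ is a rigid translation, translations commute with the integrable flow and form an abelian group, and the $\mathbb R^n$-valued cocycle splits against any partition of unity. The remaining care is quantitative: all P\"oschel estimates must be uniform across the finite cover so that ``close to the identity'' survives the finitely many corrections and compositions. Note that the non-triviality of the bundle is handled automatically: the monodromy is encoded in the affine transition functions of the $\psi_\alpha$, but the \emph{differences} $s_{\alpha\beta}$ of conjugacies are small and globally well defined, so the obstruction to a global action-angle chart does not obstruct the global interpolation.
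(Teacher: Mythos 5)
Your overall route is the same one the paper attributes to \cite{Broer2007a}: local P\"oschel conjugacies, a unicity argument showing overlap discrepancies are torus translations, and a partition-of-unity/convexity argument to split the resulting cocycle. However, there is a genuine gap in your gluing step, and it is precisely the ingredient the paper flags in its remark: the Whitney extension theorem. The shifts $s_{\alpha\beta}(I)$ exist, and are translations, only for $I$ in the Diophantine Cantor set of the overlap (indeed, by the unicity theorem of \cite{Broer2007}, uniqueness up to translation is guaranteed only at \emph{density points} of $A_{\tau,\gamma}$). Consequently your corrections $t_\alpha=\sum_\gamma\rho_\gamma s_{\alpha\gamma}$ are defined, and at best Whitney-smooth, only on a nowhere dense closed set, not on the chart $U_\alpha$. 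So ``correct each $\Phi^\alpha_\varepsilon$ by $t_\alpha$'' does not produce maps defined on $F^{-1}(U_\alpha)$, and the claim ``the corrected local maps agree on overlaps, so they assemble into a single global diffeomorphism'' fails twice over: off the Diophantine set the transitions are not translations at all (there the P\"oschel maps are essentially arbitrary near-identity extensions), so even after any correction the local maps will generically disagree on the complement of the Cantor set, which is an open dense subset of each overlap.

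To close the gap one must (a) establish that the translation cocycle is Whitney-smooth on the Diophantine set, which requires restricting to density points (this is where the unicity theorem of \cite{Broer2007} enters, and why $R'_\varepsilon$ is not simply the set of Diophantine values); (b) apply the Whitney extension theorem \cite{Whitney1934} to extend the corrections --- equivalently, to extend the corrected conjugacy, which after step (a) is a single well-defined Whitney-smooth map on the union of Diophantine tori --- to a smooth map on all of $F^{-1}(R)$; being near-identity, this extension is automatically a diffeomorphism, giving (i). The price is exactly what the statement of the theorem records: the extended map has no dynamical meaning off the Cantor set, so the conjugacy property (iii) holds only on $F^{-1}(R'_\varepsilon)$. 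Your proposal implicitly assumes the conjugacy data lives on open sets, where ordinary gluing would suffice; the whole difficulty of the theorem is that it lives on a nowhere dense set, and the Whitney extension is what converts Cantor-set data into the global diffeomorphism the statement asserts.
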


\begin{remark}
The construction of the global diffeomorphism $\Phi_{\varepsilon}$ is based heavily on the Whitney extension theorem \cite{Whitney1934} and a unicity theorem \cite{Broer2007}, stating that the local  KAM conjugacies provided by 
Theorem~\ref{theorem/kamlocal}
are unique up to a torus translation on the set of Diophantine tori corresponding to the density points of $A_{\tau, \gamma}$.
\end{remark}

\begin{remark}
In the two degree of freedom case of a focus-focus singularity, the important condition of nondegeneracy of $H$ is fulfilled
in a small neighborhood of the focus-focus fiber; \cite{Zung1996}. 
\end{remark}

From this theorem it readily follows that the notion of Hamiltonian monodromy (as well as Duistermaat's Chern class \cite{Duistermaat1980}) can be extended to sufficiently small perturbations $H + \varepsilon P$ of $H$. 

We note that in the two-degree of freedom case of monodromy around a focus-focus singularity, it is essentially sufficient to apply 
only the semi-local theorem of P{\"o}schel by assuming the interpolation diffeomorphism $\Phi_{\varepsilon}$ to be the identity outside a suitably chosen action-angle chart; for details see \cite{Rink2004}.

\section{Quantum monodromy} \label{sec/quantum_monodromy}

Consider an integrable system $F = (f_1, \ldots, f_n)$ on a cotangent bundle $T^*N$, for instance, the spherical pendulum.
Assume for simplicity, that all of the fibers of $F$ are compact and connected.
Since the symplectic form is exact, one can construct semi-local action coordinates via the formula
$$
I_i = \dfrac{1}{2\pi} \int_{\alpha_i} pdq,
$$
where $\alpha_1, \ldots, \alpha_n$ is a family (of bases of) homology cycles on Liouville tori. Different choices of such cycles result in different sets of (semi-local) action coordinates. These sets of semi-local
action coordinates are related 
by a $\textup{SL}(n, \mathbb Z)$ transformation\footnote{In general, different sets of action coordinates are related by a $\textup{SL}(n, \mathbb Z) \ltimes \mathbb R^n$ transformation; note that in our case, the symplectic form is exact.}:
\begin{equation} \label{eq/action_transformation}
(I_1, \ldots, I_n) = M (I'_1, \ldots, I'_n), \ M \in \textup{SL}(n, \mathbb Z).
\end{equation}

Recall that each of the actions is  a function of $F = (F_1, \ldots, F_n)$. Equating 
\begin{equation} \label{eq/action_quantisation}
I_i = \hbar (n_i + \mu_i), \ i = 1, \ldots, n, 
\end{equation}
the actions $I_i$ to integer multiples of the reduced Plank constant (up to the addition of Maslov's correction $\mu_i$), gives a set of points in the $(F_1, \ldots, F_n)$-space. This set of points is called a \textit{semi-classical spectrum} and 
Eq.~\ref{eq/action_quantisation} is the so-called \textit{Bohr-Sommerfeld} or \textit{action quantisation}. We note that the semi-classical spectrum does not depend on the specific choice of the cycles $\alpha_i$ because of 
Eq.~\ref{eq/action_transformation}.
In fact, this set 
locally looks like a regular $\mathbb Z^n$ lattice by the Arnol'd-Liouville theorem. Due to Hamiltonian monodromy,
this does not have to be the case globally; the global lattice may have a defect \cite{Zhilinskii2005, Zhilinskii2010}. Such a defect is always 
present when there is a focus-focus singularity of the system. In particular, it is present in the spherical pendulum
\cite{Cushman1988}.
The presence of the defect can be revealed through the transport of an elementary cell defined by adjacent points of the spectrum; compare with Fig.~\ref{fig/BD_spectrum_pendulum} for the spherical pendulum.

This is the first step towards \textit{quantum monodromy}. One can call the monodromy based on action quantisation \textit{semi-classical},
since it is constructed out of the underlying classical integrable system. 

To get to a purely quantum case, one considers a set of commuting  operators
$\hat F_1, \ldots, \hat F_n$ whose principal symbols define a classical integrable system on $T^{*}M$ as above (see \cite{Vu-Ngoc1999} for more details).
For instance, for the spherical pendulum, 
$$
\hat F_1 = \hat H = -\tfrac12 \hbar^2 \Delta + V
$$
is the corresponding Schr{\"o}dinger operator on $S^2$
and 
$$
\hat F_2 = \hat J = -i\hbar ( x \partial_{y} - y \partial_{x} ).
$$
The main paradigm is that the semi-classical spectrum obtained from the action quantisation gives
an approximation (in terms of $\hbar$) to the  \textit{joint spectrum} 
$$
\sigma(\hat F_1, \ldots, \hat F_n) = \{(\lambda_1, \ldots,  \lambda _n) \in \mathbb R^n \mid \bigcap_{i = 1}^n \ \textup{Ker} (\hat F_i - \lambda_i I) \ne 0\}
$$
of the commuting operators $\hat F_1, \ldots, \hat F_n$. 
In particular,
one can observe a lattice defect also in the purely quantum problem; see Fig.~\ref{fig/BD_spectrum_pendulum}. 

These ideas were originally introduced by Cushman-Duistermaat \cite{Cushman1988}  and Guillemin-Uribe \cite{Guillemin1989} for the spherical pendulum. They were made precise by S. V{\~u} Ng{\d{o}}c in \cite{Vu-Ngoc1999}; see also \cite{ColindeVerdiere1980, Charbonnel1988}.
For more information on the spectral theory of integrable systems, we refer the reader to 
\cite{Pelayo2012}.

\section{Fractional monodromy} \label{sec/fractional_monodromy}

As we have seen in the previous chapters, Hamiltonian monodromy is intimately
related to the singularities of a given integrable  system. However, this invariant is defined 
for the non-singular part 
$$F \colon F^{-1}(R) \to 
R$$
of the possibly singular torus fibration $F \colon M \to \mathbb R^n$ that comes with the  system. 
An invariant that generalises Hamiltonian monodromy to singular torus fibrations was introduced 
by Nekhoroshev, Sadovski\'{i} and Zhilinski\'{i} in 
\cite{Nekhoroshev2006} and it is called \textit{fractional monodromy}.

\subsection{$1$:$(-2)$ resonant system} \label{sec/intrresonance}
Fractional monodromy has up until now been discussed mainly for the so-called $m$:$(-n)$ resonances;
see \cite{Efstathiou2007, Sugny2008, Schmidt2010, Efstathiou2013}.
We shall only focus here on the special case of $1$:$(-2)$ resonance, which is the simplest and historically the first example of an integrable Hamiltonian system with fractional monodromy introduced in the work \cite{Nekhoroshev2006}.

Consider $\mathbb R^4$ with the standard symplectic structure $\omega = dq \wedge dp$. Let the integral map 
$F = (H, J) \colon \mathbb R^4 \to \mathbb R^2 $ be defined by the Hamiltonian function
\begin{equation*}
H = 2q_1p_1q_2 + (q_1^2 - p_1^2)p_2 +  R^2, 
\end{equation*}
where $R = \frac{1}{2}(q_1^2 + p_1^2) + (q_2^2 + p_2^2)$, and the
 `momentum'
\begin{equation*}
J = \frac{1}{2}(q_1^2 + p_1^2) - (q_2^2 + p_2^2).
\end{equation*}
We note that the functions $H$ and $J$ are involution, so that
$F$ is indeed the integral map of an integrable Hamiltonian system. We also note that
the function $J$ defines a  Hamiltonian circle action on $\mathbb R^4$ which preserves the fibration given by $F$.

The bifurcation diagram of the integral map $F$
is shown in Figure~\ref{1stBD}.  From the structure of the diagram we observe that the Hamiltonian monodromy is trivial.
Indeed, the set
$$R  = \{f \in \textup{image}(F) \mid f \mbox{ is a regular value of } F\}$$
is contractible. In particular, every closed path in  $R$ can be deformed to a constant path within $R$. Non-triviality appears if one considers the closed curve $\gamma$ that is shown in Fig.~\ref{1stBD}. 

\begin{figure}[htbp]
  \includegraphics[width=1.1\linewidth]{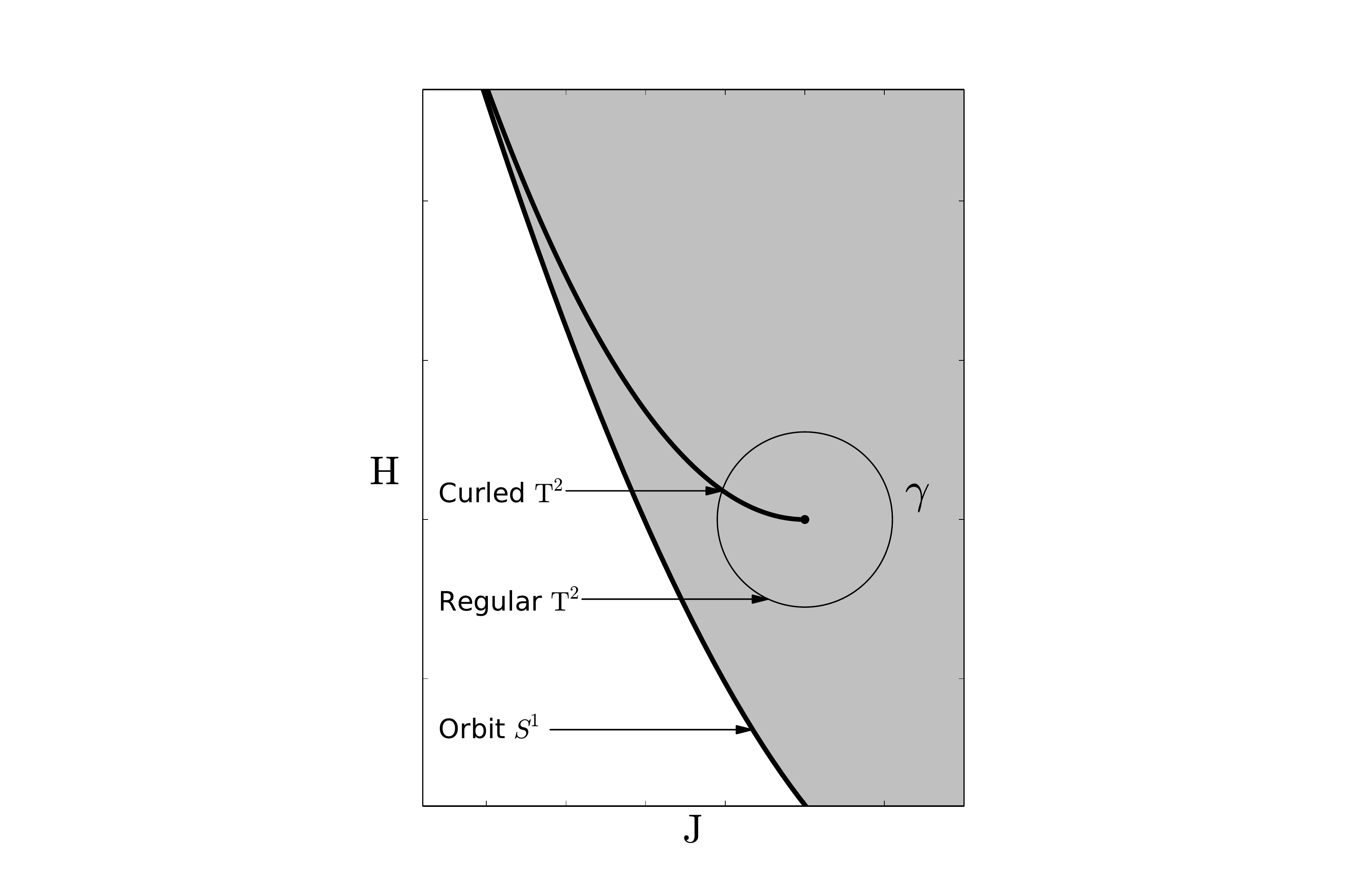}
  \caption{The bifurcation diagram of the integrable $1$:$(-2)$  resonance.
   The closed curve $\gamma$ around the origin intersects the critical hyperbolic branch.}
  \label{1stBD}
\end{figure}

More specifically, consider a non-singular point
$\gamma(t_0)$ and a basis $(a_0,b_0)$ of the integer homology group $H_1(F^{-1}(\gamma(t_0))) \simeq \mathbb Z^2$. Then one can try to `parallel transport' these cycles along $\gamma$ such that at each regular point $\gamma(t)$ they form a basis of $H_1(F^{-1}(\gamma(t_0)))$ and such that the resulting family of cycles is (locally) continuous, also at the critical fiber, corresponding to the intersection of $\gamma$ with the critical hyperbolic branch\footnote{This critical fiber is the so-called curled torus, which can be obtained as follows. Take the direct product of a figure eight and a segment. Identify the upper and the lower boundary components of this product after making a rotation (of the upper component) by the angle $\pi$. The result is schematically shown in Fig.~\ref{Curledtorus}.}. We note that in the case of Hamiltonian monodromy, when we are moving along 
regular Liouville tori, such a parallel transport is always possible \cite{Duistermaat1980}. In this fractional monodromy case, it turns out that 
only a subgroup of $H_1(F^{-1}(\gamma(t_0)))$ can be transported through the critical fiber. Specifically, there is the following result.

\begin{figure}
 \includegraphics[width=0.8\linewidth]{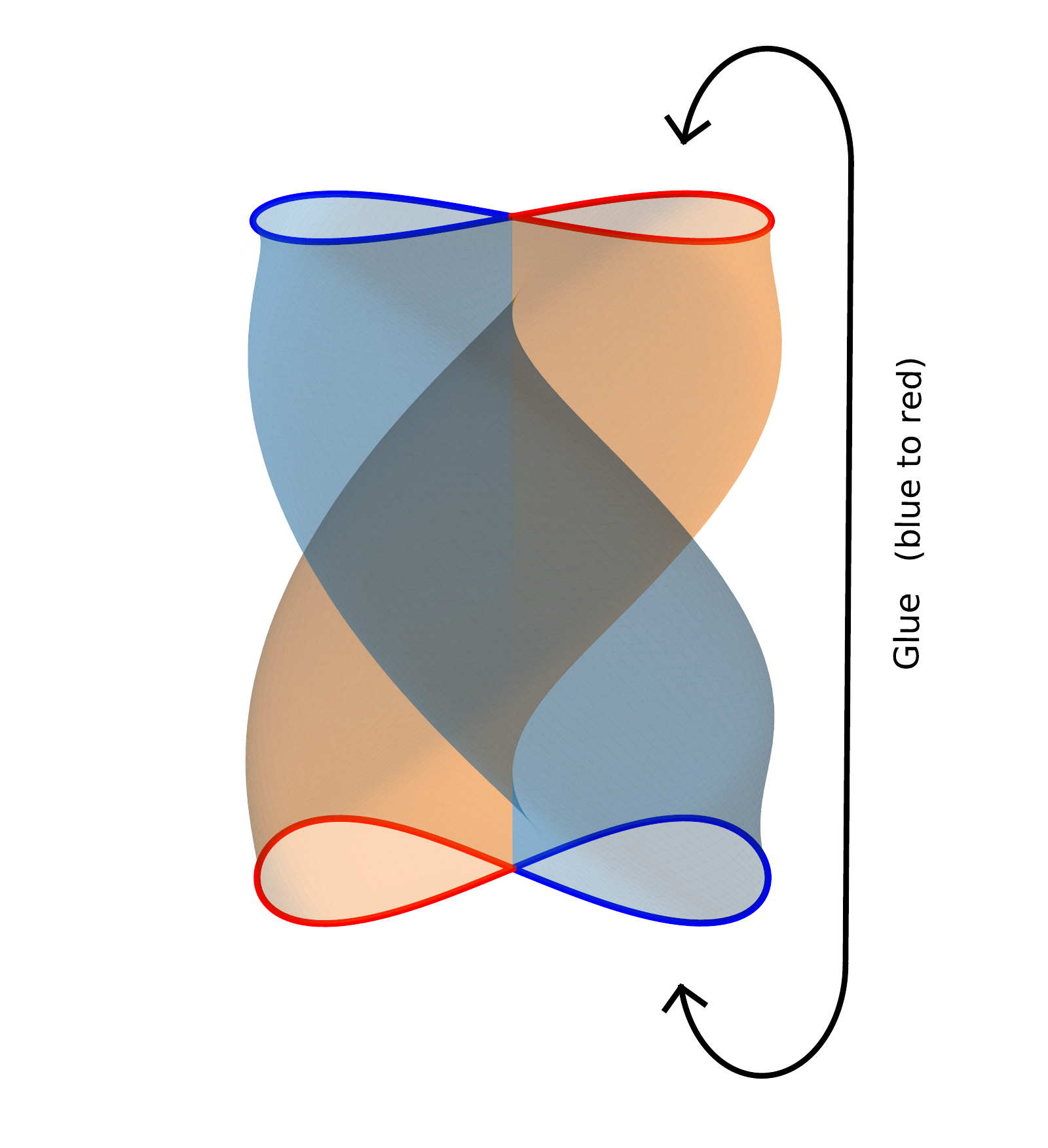}
  \caption{Curled torus}
  \label{Curledtorus}
\end{figure}

\begin{theorem} \textup{(\cite{Nekhoroshev2006})} \label{theorem/fmresonance}
Let $(a_0,b_0)$ be an integer basis of $H_1(F^{-1}(\gamma(t_0)))$, where $\gamma(t_0) \in R$ and  $b_0$ is  an orbit of the circle action. 
The parallel transport (fractional monodromy) along the curve $\gamma$ is given by
 \begin{equation*}2 a_0 \mapsto 2 a_0 + b_0, \  b_0 \mapsto b_0.
 \end{equation*} 
\end{theorem}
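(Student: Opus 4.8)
The plan is to reduce the computation to the global Hamiltonian circle action generated by $J$, whose flow is $(z,w)\mapsto(e^{-it}z,e^{2it}w)$ with $z=q_1+ip_1$, $w=q_2+ip_2$. This action is free away from the plane $P=\{q_1=p_1=0\}$, where its isotropy is $\mathbb Z/2$, and the critical hyperbolic branch in Figure~\ref{1stBD} consists exactly of the values whose fibers meet $P$ (on $P$ one has $J\le 0$ and $H=J^2$). First I would observe that, since $b_0$ is the class of an orbit of this action and the action preserves every fiber $F^{-1}(\gamma(t))$, including the curled torus, the orbit cycles furnish a continuous, globally defined transport of $b_0$ along all of $\gamma$; hence $b_0\mapsto b_0$. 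The entire content of the theorem is therefore the behaviour of the complementary class $a_0$ across the single critical fiber where $\gamma$ meets the hyperbolic branch.

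Next I would set up a local model for the passage through the curled torus. Restricted to a slice transverse to $P$, the $1{:}(-2)$ system degenerates along a separatrix (figure-eight) family, and the curled torus is the figure-eight times an interval glued by a rotation of $\pi$, as recorded in the footnote to the statement and in Figure~\ref{Curledtorus}. This half-twist is the topological trace of the $\mathbb Z/2$ isotropy of the circle action. The key claim to prove here is that the half-twist obstructs the continuous extension of a single copy of $a_0$ through the critical fiber, whereas the doubled class $2a_0$ does extend: the two strands of the $a$-cycle that are interchanged by the $\pi$-rotation close up only after the class is taken twice. I would make this precise by following an explicit representative of $a_0$ on the regular tori as $\gamma(t)$ approaches, meets, and leaves the critical value, and checking that exactly the even multiples of $a_0$ persist as genuine $1$-cycles compatible with the gluing of the curled torus.

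With the $2$-divisibility in hand, the defect picked up by $2a_0$ is computed by comparing the incoming and outgoing integer bases of $H_1$ on the two sides of the crossing. One route is a direct gluing calculation: present the neighbouring regular fibers as solid tori glued along their boundaries, exactly as in the Fomenko-Zieschang description used above for the spherical pendulum, and read off the change of basis. A more conceptual route is to invoke the Seifert-fibred refinement of Theorems~\ref{theorem/Bolsinov_Fomenko_Zieschang} and~\ref{theorem/EM}, in which the circle action over $\gamma$ carries a \emph{fractional} Chern number equal to $1/2$ because of the single $\mathbb Z/2$ exceptional fiber produced by $P$; this fractional Chern number then translates directly into the transport law. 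Either way one obtains that the transported class differs from $2a_0$ by precisely one copy of the orbit cycle, giving $2a_0\mapsto 2a_0+b_0$.

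The main obstacle is the second step: assigning a rigorous meaning to ``parallel transport through the curled torus'' and proving the exact $2$-divisibility phenomenon together with the precise $+b_0$ defect. Away from the critical value the transport is the ordinary continuous deformation of homology bases in a torus bundle, but at the crossing the fibration fails to be locally trivial and one must argue on the singular fiber itself, using the $\mathbb Z/2$ isotropy in an essential way. A clean way to control this is to pass to the double cover branched along $\{q_2=p_2=0\}$, obtained by setting $w=u^2$: the lifted circle action $(z,u)\mapsto(e^{-it}z,e^{it}u)$ is free away from the origin and is precisely the $1{:}(-1)$ focus-focus action of Theorem~\ref{zungaction1}, so the curled-torus crossing lifts to an ordinary focus-focus crossing whose integer monodromy is governed by the geometric monodromy theorem (Theorem~\ref{theorem/geometric_monodromy_theorem0}) and Picard-Lefschetz theory. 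Descending from the cover and dividing the resulting integer transport by two recovers the fractional law $2a_0\mapsto 2a_0+b_0$, and reconciling this descent with the direct computation of the previous paragraph provides the required cross-check.
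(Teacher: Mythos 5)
Your opening observation ($b_0\mapsto b_0$ because the orbits of the circle action live on every fiber, including the curled torus) and your identification of the crux (the behaviour of $a_0$ across the single crossing) agree with the paper. The genuine gap is in the step you yourself designate as the way to make everything rigorous: the branched double cover $w=u^2$ does \emph{not} turn the curled-torus crossing into a focus-focus crossing. The curled torus is the mapping torus of the flip of the figure eight, and the double cover induced by $w=u^2$ is classified by the winding number of $w$ modulo $2$; the singular circle of the curled torus is an orbit with $\mathbb Z_2$ isotropy, along which $w$ winds exactly once, so the cover unwinds precisely the base direction and the lifted fiber is the mapping torus of the \emph{square} of the flip, i.e.\ $(\text{figure eight})\times S^1$ (cf.\ Fig.~\ref{Curledtorus}). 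This fiber contains a whole circle of rank-one critical points, not an isolated focus-focus point, so neither Theorem~\ref{theorem/geometric_monodromy_theorem0} nor Picard--Lefschetz theory applies to it, and parallel transport through it is exactly as problematic as the transport you started with. Matters are made worse by the branching itself: along the branch locus $\{w=0\}$ the differential of $F\circ\Phi$, $\Phi(z,u)=(z,u^2)$, drops rank, so the branch $\{H=J^2,\ J\ge 0\}$, which consists of \emph{regular} values downstairs (its fibers contain the circle $\{w=0\}$), becomes a curve of critical values upstairs; your loop $\gamma$ must cross it as well, so the lifted problem has two singular crossings instead of one. Consequently the ``compute the integer transport upstairs and divide by two'' step has nothing to stand on, and the $2$-divisibility claim of your second paragraph remains unproved.

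The paper's proof uses the opposite operation, and that is exactly what makes it work: instead of passing to a $2$-fold cover, one passes to the quotient $F^{-1}(\gamma)/\mathbb Z_2$ by the isotropy subgroup $\mathbb Z_2\subset S^1$. The quotient folds the two lobes of each figure eight onto one another, so the curled torus descends to a \emph{smooth} torus and $F^{-1}(\gamma)/\mathbb Z_2\to\gamma$ is an honest torus bundle (Lemma~\ref{lemma/fractm_quotient}); its integer monodromy is $\left(\begin{smallmatrix}1&1\\0&1\end{smallmatrix}\right)$, because the Seifert fibration $F^{-1}(\gamma)$ has Euler number $1/2$ (computed via Stokes' theorem from the $1{:}(-2)$ weights at the origin) and Theorem~\ref{theorem/Bolsinov_Fomenko_Zieschang} applies downstairs. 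Lifting cycles back through the quotient map ($a_0$ is untouched, while the regular orbit satisfies $b_0\simeq 2b_0^r$) then yields the $2$-divisibility and the law $2a_0\mapsto 2a_0+b_0$ in one stroke; there is no separate divisibility argument to supply. Note that your alternative route via the ``fractional Chern number $1/2$'' is in substance this same argument (Theorem~\ref{theorem/fracm_main} with $N=2$ and $e=k/N=1/2$, hence $k=1$), so if you discard the branched-cover step and run the quotient/Seifert computation to completion, your outline becomes the paper's proof. A covering-type approach does exist in the literature \cite{Efstathiou2013}, but the uncovering map there is a fiberwise resolution that takes the curled torus to a torus; it is not a branched cover of the ambient $\mathbb R^4$ of the form $w=u^2$.
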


\begin{remark} 
When written formally in the integer basis $(a_0,b_0)$, the parallel transport has the form of the \textit{rational} matrix
\[
\begin{pmatrix}
 1 & 1/2 \\
 0 & 1
\end{pmatrix} \in \mathrm{SL}(2,\mathbb Q),
\]
called the matrix of \textit{fractional} monodromy.
\end{remark}

\begin{remark} \label{remark/mark} 
 Theorem~\ref{theorem/fmresonance} is closely related to Fomenko-Zieschang theory. More specifically, to the curve $\gamma$ one can associate its loop molecule, which consists of one atom  $A^*$, corresponding to a neighborhood of the curled torus, and the marks $r = \infty, \ \varepsilon  = 1, \ n = 1$. Fractional monodromy is a function of these invariants, in this case determined by the atom $A^*$ and the $n$-mark; see \cite{Bolsinov2012, Martynchuk2018} for more details. 
 
 \end{remark}

 Since the pioneering work \cite{Nekhoroshev2006}, various proofs of Theorem~\ref{theorem/fmresonance} appeared; see
 \cite{Efstathiou2007, Giacobbe2008, Sugny2008, Broer2010, Tonkonog2013, Efstathiou2013, Martynchuk2017}. A natural approach, which was pursued in \cite{Giacobbe2008, Broer2010, Tonkonog2013},
 is to separate the problem into two parts: the computation of fractional monodromy in a neighborhood $U$ of the curled torus and the computation of (essentially) 
 the usual monodromy outside of this neighborhood $U$. We note that the Liouville fibration inside $U$ is topologically standard (that is, does not depend on the specific system, but only on the singularity). Another  approach, 
 which was pursued in the work \cite{Sugny2008}, is to complexify the system to bypass the hyperbolic branch and compute 
 the variation of the rotation number in the complexified domain; cf. \cite{Audin2002}. We note that this approach works also for higher order resonances. Below we sketch a different proof of Theorem~\ref{theorem/fmresonance}, following the point of view of Seifert manifolds, developed in the work \cite{Martynchuk2017}.

\begin{proof}[Proof of Theorem~\ref{theorem/fmresonance}]
Consider again the curve $\gamma$ shown in Fig.~\ref{1stBD}.
The key observation, which was already made in \cite{Bolsinov2012}, is that $F^{-1}(\gamma)$ is  a Seifert $3$-manifold. The structure of a Seifert fibration comes from the circle action given by the
momentum $J$.
In complex coordinates
$z = p_1 + iq_1$ and $w = p_2+iq_2$, this circle action has the form
\begin{equation} \label{circleaction}
(t,z,w) \mapsto (e^{it}z, e^{-2it}w), \ t \in \mathbb S^1.
\end{equation}
We observe that the origin is fixed under this action and that the set
$$
P = \{(q,p) \mid q_1 = p_1 = 0 \mbox{ and } q_2^2 + p_2^2 \ne 0\}
$$
consists of points with $\mathbb Z_2$ isotropy group. This implies that the Euler number of the Seifert
manifold $F^{-1}(\gamma)$ equals $1/2 \ne 0$. Indeed, Stokes' theorem implies that
the Euler number of
$F^{-1}(\gamma)$ coincides with the Euler number of a small $3$-sphere around the origin $z = w = 0$. The latter Euler number equals $1/2$
because of  \eqref{circleaction}. From this and Theorem~\ref{theorem/Bolsinov_Fomenko_Zieschang}, we get the following.

\begin{lemma} \label{lemma/fractm_quotient} \textup{(\cite{Martynchuk2017})}
 The quotient space
 $F^{-1}(\gamma) / \mathbb Z_2$ is the total space of a torus bundle over $\gamma$. Its monodromy is given by
 $$
M = \begin{pmatrix}
 1 & 1 \\
 0 & 1
\end{pmatrix} \in \mathrm{SL}(2,\mathbb Z).
$$
\end{lemma}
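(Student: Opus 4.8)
The plan is to remove the $\mathbb{Z}_2$ isotropy of the circle action \eqref{circleaction} by passing to a quotient, thereby converting the Seifert fibration $F^{-1}(\gamma)$ into an honest principal circle bundle to which Theorem~\ref{theorem/Bolsinov_Fomenko_Zieschang} applies. Concretely, the subgroup $\mathbb{Z}_2 = \{1, e^{i\pi}\} \subset \mathbb{S}^1$ acts on $F^{-1}(\gamma)$ by $(z,w)\mapsto(-z,w)$, and its fixed-point set is exactly the exceptional locus $P \cap F^{-1}(\gamma)$. First I would check that $F^{-1}(\gamma)/\mathbb{Z}_2$ is again a (topological) $3$-manifold: near $P$ the nontrivial element acts as rotation by $\pi$ in the normal $z$-plane, and $\mathbb{C}/(z\sim -z)$ is a cone of total angle $\pi$, which is homeomorphic to $\mathbb{C}$. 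I would then verify that the induced action of $\mathbb{S}^1/\mathbb{Z}_2\cong\mathbb{S}^1$ on $F^{-1}(\gamma)/\mathbb{Z}_2$ is free: a point with trivial $\mathbb{S}^1$-isotropy upstairs and a point of $P$ (with isotropy exactly $\mathbb{Z}_2$) both descend to points with trivial $\mathbb{S}^1/\mathbb{Z}_2$-isotropy, and these are the only two orbit types on $F^{-1}(\gamma)$.

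Next I would show that the descended map $\bar F\colon F^{-1}(\gamma)/\mathbb{Z}_2\to\gamma$ is a torus bundle. Over a regular value $\gamma(t)$ the fibre $F^{-1}(\gamma(t))$ is a torus on which $\mathbb{Z}_2$ acts freely, as a half-period translation along the regular circle orbits, so the quotient $T^2/\mathbb{Z}_2$ is again a $2$-torus. Over the value where $\gamma$ meets the hyperbolic branch the fibre is the curled torus; the essential point is that the $\mathbb{Z}_2$-quotient desingularises it into a smooth $2$-torus --- informally, the quotient undoes precisely the rotation by $\pi$ used to build the curled torus in the footnote construction. Together with freeness of the reduced action this makes $\bar F$ a locally trivial torus bundle over all of $\gamma$. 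Since the $b$-cycle can be taken as an orbit of the globally defined reduced circle action, Theorem~\ref{theorem/Bolsinov_Fomenko_Zieschang} already guarantees that the monodromy has the upper-triangular form $\left(\begin{smallmatrix}1 & m\\ 0 & 1\end{smallmatrix}\right)$, and identifies $m$ with the Chern number of the principal bundle $F^{-1}(\gamma)/\mathbb{Z}_2\to (F^{-1}(\gamma)/\mathbb{Z}_2)/\mathbb{S}^1$.

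It remains to compute this Chern number, which I would relate to the Euler number $1/2$ of the original Seifert fibration. Away from $P$ both actions are free; choosing an invariant connection $1$-form $\alpha$ upstairs with $\alpha(X)=1$ and the corresponding $\bar\alpha$ downstairs, and reparametrising $\mathbb{S}^1/\mathbb{Z}_2$ to unit period via $s = 2t$, one finds $q_*X = 2\bar X$ for the generating vector fields, and hence $q^*\bar\alpha = 2\alpha$. Since the two fibrations share the same base and $\bar\pi\circ q = \pi$, the curvatures satisfy $\bar\Omega = 2\Omega$, so the Chern number of the quotient bundle is $m = 2\cdot\tfrac12 = 1$. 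Plugging $m = 1$ into Theorem~\ref{theorem/Bolsinov_Fomenko_Zieschang} yields the asserted matrix.

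The main obstacle is the middle step: verifying in a local normal form near the exceptional orbits that the $\mathbb{Z}_2$-quotient genuinely desingularises the curled torus into a smooth torus, so that $\bar F$ is a submersion even over the formerly critical value, and, relatedly, justifying the curvature identity $\bar\Omega = 2\Omega$ across the resolved orbifold point so that the doubling of the Euler number is rigorous. Once the local model is in hand, both points reduce to the bookkeeping above.
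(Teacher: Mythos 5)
Your proposal is correct and follows essentially the same route as the paper: pass to the $\mathbb{Z}_2$-quotient to turn the Seifert fibration $F^{-1}(\gamma)$ into a genuine torus bundle with a free circle action, apply Theorem~\ref{theorem/Bolsinov_Fomenko_Zieschang}, and obtain the Chern number $m = 1$ by doubling the Euler number $1/2$ of the original Seifert fibration. The local verifications you flag at the end (manifold structure of the quotient, desingularisation of the curled torus so that $\bar F$ is locally trivial, and the doubling of the Euler number under the quotient) are precisely the details the paper leaves implicit and defers to \cite{Martynchuk2017}.
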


From Lemma~\ref{lemma/fractm_quotient} we infer that the 
parallel transport along the curve $\gamma$ in the $\mathbb Z_2$-quotient space has the form 
 \begin{equation*}a^r_0 \mapsto a^r_0 + b^r_0, \  b^r_0 \mapsto b^r_0,
 \end{equation*} 
 where the cycles $a^r_0 = a_0 /\mathbb Z_2$ and $b^r_0 = b_0 /\mathbb Z_2$  form the induced basis of the group
 $H_1(F^{-1}(\gamma(t_0)) / \mathbb Z_2).$ 
 Observe that $a_0$ is not affected by the quotient map, and the orbit $b_0$ becomes `shorter': $2 b^r_0 \simeq b_0$.
 It follows that the parallel transport in the original space 
 has the form
 \begin{equation*}2a_0 \mapsto 2a_0 + b_0, \  b_0 \mapsto b_0.
 \end{equation*} 
This concludes the proof of Theorem~\ref{theorem/fmresonance}.
\end{proof}

We note that the idea of computing fractional monodromy using a covering map appeared in the work \cite{Efstathiou2013}, where the authors computed fractional monodromy for a large class of integrable systems with an $m$:$(-n)$ resonance.
There an \textit{un}covering map was used to lift the (possibly singular) Lagrangian fibers to a union of tori. Here we used a covering map instead. Moreover, we focused not on the fibers of the energy-momentum map, but rather on
the global topology of an associated Seifert fibration. This approach, which was  developed in the work \cite{Martynchuk2017}, turned out to be very effective and allowed one to define fractional monodromy over an arbitrary Seifert manifold  with an orientable 
base of genus $g \ge 1$. (We note that the known examples appeared as a special case of this construction when 
the genus $g = 1$ and there are at most two singular fibers of the Seifert fibration.) The precise results can be stated as follows; cf. Theorems~\ref{theorem/Bolsinov_Fomenko_Zieschang} and \ref{theorem/EM}.

\begin{theorem} \label{theorem/fracm_main} \textup{(\cite{Martynchuk2017})}
  Let $X$ be the total space of a Seifert fibration  with an orientable 
base such that the boundary of $X$ consists of two tori. Let $X_f$ be the closed Seifert manifold obtained by gluing these tori via a fiber-preserving diffeomorphism $f$. Take bases of these tori $(a_0,b_0)$ and $(a_1,b_1)$ such that $b_0, b_1$ correspond to non-singular fibers of the Seifert fibration.  Let $N$ denote the least common multiple of the orders of exceptional fibers. Then only linear combinations of $Na_0$ and $b_0$ can be parallel transported along $X$ and under the parallel
  transport
  \begin{equation*}
 \begin{split}
N a_0 & \mapsto N a_1 + k b_1 \\
b_0 & \mapsto b_1
\end{split}
\end{equation*} 
for some integer $k = k(f)$ which depends only on the isotopy class of the diffeomorphism $f.$ Moreover, the Euler number of $X(f)$ is given by 
$e(f) = k(f)/N.$
\end{theorem}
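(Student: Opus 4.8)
The plan is to generalise the covering-map argument used above for the $1$:$(-2)$ resonance (the proof of Theorem~\ref{theorem/fmresonance}), replacing the quotient by $\mathbb Z_2$ with a quotient by $\mathbb Z_N$, where $N$ is the least common multiple of the orders of the exceptional fibers. The guiding principle is that the Seifert fibration becomes a genuine \emph{principal} circle bundle once one kills the isotropy of all exceptional fibers simultaneously, and on a principal bundle the transport is governed by the Chern (Euler) number via Theorem~\ref{theorem/Bolsinov_Fomenko_Zieschang}. Concretely, I would first record that each exceptional fiber carries isotropy $\mathbb Z_{\alpha_i}$ with $\alpha_i \mid N$, and then quotient by the subgroup $\mathbb Z_N \subset S^1$ of $N$-th roots of unity. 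One checks that the residual $S^1/\mathbb Z_N \cong S^1$ action on $X_f/\mathbb Z_N$ is free: a point with $S^1$-isotropy $\mathbb Z_{\alpha_i}$ loses all isotropy once $\alpha_i \mid N$, and the local model $z \mapsto z^{\alpha_i}$ in the normal plane shows the quotient is a smooth manifold. Hence $X/\mathbb Z_N$ is a principal circle bundle with two boundary tori, and $X_f/\mathbb Z_N$ is a closed principal circle bundle.

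Next I would apply Theorem~\ref{theorem/Bolsinov_Fomenko_Zieschang} to the quotient bundle, so that the transport of a section cycle across $X/\mathbb Z_N$ takes the form $a_0^r \mapsto a_1^r + k\, b_1^r$ and $b_0^r \mapsto b_1^r$, where $k$ is the integer Chern number of the bundle and $a_i^r, b_i^r$ denote the induced bases of the quotient boundary tori. I would then descend this formula to $X$ itself: the section cycle is unaffected by the quotient, so $a_0^r = a_0$, whereas the regular fiber becomes $N$ times shorter, i.e.\ $N b_i^r \simeq b_i$. Substituting $b_i^r = b_i / N$ turns the integer transport upstairs into the constrained, fractional transport
\[
N a_0 \mapsto N a_1 + k\, b_1, \qquad b_0 \mapsto b_1,
\]
in $X$. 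The appearance of the denominator $N$ is exactly the assertion that only the sublattice generated by $N a_0$ and $b_0$ transports integrally, since $k\, b_i / N$ need not be an integral class.

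Finally I would identify $k$ with the Euler number of $X_f$. By Stokes' theorem (precisely as in the $1$:$(-2)$ computation, where the Euler number $1/2$ of the singular fiber was read off from a small $3$-sphere about the fixed point) the Chern number of $X_f/\mathbb Z_N$ equals $N$ times the Euler number of $X_f$: rescaling a connection one-form so that the shorter quotient fiber again has unit period multiplies the curvature, hence the Euler number, by $N$. This yields $e(f) = k/N$. That $k$ depends only on the isotopy class of $f$ then follows because the Euler number is a homeomorphism invariant of the closed manifold $X_f$, and isotopic gluing maps produce homeomorphic $X_f$.

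The hard part will be the local analysis at the exceptional fibers carried out uniformly over all orders $\alpha_i$: one must verify at once that the $\mathbb Z_N$-quotient is a smooth principal bundle, that exactly $N a_0$ and no smaller multiple survives the transport, and that the Stokes/connection computation correctly accumulates the fractional contributions $\beta_i/\alpha_i$ of the singular fibers together with the contribution of the gluing $f$. The higher-genus base introduces additional homology of $X$ but does not affect the boundary-to-boundary transport of the distinguished cycles $a_0, b_0$, so the crux remains the bookkeeping of the exceptional-fiber orders through the covering.
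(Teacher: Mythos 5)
Your strategy---pass to the quotient by $\mathbb Z_N\subset S^1$, where the Seifert fibration becomes a principal circle bundle, read off the transport there, and descend---is the natural generalisation of the paper's own $\mathbb Z_2$-quotient sketch for the $1$:$(-2)$ resonance (Lemma~\ref{lemma/fractm_quotient}), and it does correctly organise the computation of $k$ and the relation $e(f)=k(f)/N$. The genuine gap is the ``only'' half of the theorem: that nothing outside the span of $Na_0$ and $b_0$ can be transported. No argument that factors through the free quotient can prove this, because the quotient map $\pi\colon X\to X/\mathbb Z_N$ kills exactly the classes that obstruct finer transport. Indeed, since the $\mathbb Z_N$-action sits inside the circle action, each of its elements acts trivially on homology, and the transfer homomorphism gives $\mathrm{tr}\circ\pi_{*}=N\cdot\mathrm{id}$ on $H_1(X)$; hence $\ker\pi_{*}$ consists of $N$-torsion classes, which are invisible in $H_1(X/\mathbb Z_N)$ (the quotient is a trivialisable circle bundle over a surface with boundary, so its $H_1$ is torsion free). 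Your justification for the ``only'' part---that $k b_1/N$ need not be an integral class---is not merely incomplete, it gives the wrong answer in examples. Take $X$ over the annulus with two exceptional fibers of type $(2,1)$, so $N=2$. The Seifert presentation of $H_1(X)$ has generators $a_0,a_1,q_1,q_2,h$ with relations $2q_i+h=0$ and $a_0-a_1=-(q_1+q_2)$; one finds $k=2$, so $k/N=1$ and $a_1+(k/N)b_1$ is a perfectly integral class, yet $a_0$ cannot be transported: $a_0-a_1-b_1=q_1-q_2$ is a nonzero $2$-torsion element, and $a_0-a_1-n'b_1\neq 0$ for every integer $n'$.

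The same torsion leaks into your positive direction. From $[a_0^r]=[a_1^r+k\,b_1^r]$ downstairs you may conclude only that $Na_0-Na_1-k\,b_1\in\ker\pi_{*}$, i.e.\ that it is $N$-torsion---not that it vanishes. It does vanish, but one sees this from the relations $\alpha_i q_i+\beta_i h=0$, which give $N(a_0-a_1)=N\bigl(\sum_i \beta_i/\alpha_i\bigr)h$ on the nose, not from the quotient. A smaller point: Theorem~\ref{theorem/Bolsinov_Fomenko_Zieschang}, which you invoke upstairs of the descent, concerns torus bundles over a circle (annulus base); for the genus $g\ge 1$ bases that Theorem~\ref{theorem/fracm_main} is designed to cover, the free case needs its own (easy, via trivialisability over a surface with boundary) argument, which you defer. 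In short, the quotient idea recovers the survey's sketch of Theorem~\ref{theorem/fmresonance} and the formula for $k$, but the heart of the general theorem---identifying exactly which sublattice is transportable---requires the homological bookkeeping at the exceptional fibers that your proposal replaces with a heuristic.
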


\begin{remark}
We note that, in this case, the matrix of fractional monodromy is given by 
$$M_X = 
\begin{pmatrix}
 1 & e(f) \\
 0 & 1
\end{pmatrix} \in SL(2,\mathbb Q).
$$
\end{remark}

\begin{remark}
In Theorem~\ref{theorem/fracm_main}, we use the notion of parallel transport introduced in \cite{Efstathiou2013}. Specifically, let $\partial X = \mathbb T^2_1 \sqcup \mathbb T^2_0$. By definition, a cycle $\alpha_1 \in H_1(\mathbb T^2_1)$ is a parallel transport of $\alpha_0 \in H_1(\mathbb T^2_0)$ if these cycles are of the same integer homology class in $X$.
We note that this definition of parallel transport can be used for abstract manifolds  with boundary, without an explicit connection to integrability. However, 
such a parallel transport is not always well defined: one can construct examples of $3$-manifolds where parallel transport is not unique or does not give rise to a well-defined automorphism \cite{Martynchuk2018}.
According to Theorem~\ref{theorem/fracm_main}, this notion of parallel transport is well defined for Seifert manifolds with an orientable base and results in an automorphism of an index-$N$ subgroup of $H_1(\mathbb T^2_0 \simeq_f \mathbb T^2_1).$
\end{remark}

Theorem~\ref{theorem/fracm_main} implies that in order to compute fractional monodromy for a specific integrable system, it is  sufficient to compute the orders of exceptional orbits  and the Euler number of the corresponding Seifert fibration. We note that in concrete examples of integrable systems, the orders of exceptional orbits  are often known from the circle action.  To compute the Euler number,  one can use the following result.

\begin{theorem}  \textup{(\cite{Martynchuk2017})}
Let $M$ be a compact oriented $4$-manifold that admits an effective circle action. Assume that the action is fixed-point free on the boundary
$\partial M$ and has only finitely many fixed points $p_1, \ldots, p_{\ell}$ in the interior. Then
\begin{equation*}
e(\partial M) = \sum\limits_{k=1}^{\ell} \dfrac{1}{m_k n_k},
\end{equation*} 
where $(m_k,n_k)$ are isotropy weights of the fixed points $p_k$.
\end{theorem}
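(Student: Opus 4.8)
The plan is to localize the Euler number at the interior fixed points and to reduce the global statement to a standard computation on small invariant spheres, using a Stokes-type argument on the complement of the fixed points. First I would fix the analytic description of the Euler number that I will use. For a closed oriented $3$-manifold $Y$ carrying a locally free circle action with fundamental vector field $\xi$ normalized to period $2\pi$, one chooses an invariant connection $1$-form $\alpha$ with $\alpha(\xi)=1$; such an $\alpha$ exists by averaging, over the orbits, any $1$-form dual to the nowhere-vanishing field $\xi$. The rational Euler number is then
\[
 e(Y)=\frac{1}{(2\pi)^2}\int_Y \alpha\wedge d\alpha .
\]
A check on the Hopf fibration (where $\alpha$ is the standard contact form and the integral equals $4\pi^2$) gives $e=1$, and the quotient $L(p,1)$ gives $e=1/p$, so this normalization agrees with the Seifert Euler number including exceptional fibers. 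I would record that $d\alpha$ is basic: from $\iota_\xi d\alpha = L_\xi\alpha - d(\iota_\xi\alpha)=0$ together with $L_\xi d\alpha=0$ one gets $d\alpha=\pi^\ast\beta$ for a $2$-form $\beta$ on the orbit space.

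Next comes the localization. By Bochner's linearization theorem the action near each fixed point $p_k$ is smoothly conjugate to the linear model $t\cdot(z,w)=(e^{im_k t}z,\,e^{in_k t}w)$ on $\mathbb C^2$, so I would excise small invariant balls $B_k$ around the $p_k$ and set $M'=M\setminus\bigsqcup_k B_k$, with $\partial M'=\partial M\sqcup\bigsqcup_k(-S^3_k)$, where $S^3_k=\partial B_k$ carries the weighted Hopf action of weights $(m_k,n_k)$. On $M'$ the action is locally free (no fixed points remain, and exceptional orbits with finite isotropy cause no difficulty since $\xi$ stays nonzero there), so a global invariant $\alpha$ with $\alpha(\xi)=1$ exists, patched with an invariant partition of unity. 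The crucial point is that $d\alpha\wedge d\alpha=\pi^\ast(\beta\wedge\beta)=0$, because $\beta\wedge\beta$ is a $4$-form on the $3$-dimensional orbit space $M'/\mathbb S^1$. Stokes' theorem then gives
\[
 0=\int_{M'}d(\alpha\wedge d\alpha)=\int_{\partial M'}\alpha\wedge d\alpha =\int_{\partial M}\alpha\wedge d\alpha-\sum_k\int_{S^3_k}\alpha\wedge d\alpha ,
\]
so that $e(\partial M)=\sum_k e(S^3_k)$. This is the same mechanism already used for the $1$:$(-2)$ system in Section~\ref{sec/fractional_monodromy}. It then remains to compute the local contribution $e(S^3_k)=1/(m_kn_k)$ for the weighted Hopf fibration: the orbit space is a $2$-sphere with two cone points of orders $|m_k|$ and $|n_k|$ lying under the exceptional fibers $\{z=0\}$ and $\{w=0\}$, and its Seifert Euler number equals $1/(m_kn_k)$ (for $m_k=n_k=1$ recovering the Euler number $1$ of the ordinary Hopf fibration). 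Substituting into the additivity relation yields the claimed formula.

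The main obstacle I expect is bookkeeping rather than conceptual: getting the orientation conventions consistent so that the boundary spheres enter with the signs that make the weights $m_k,n_k$ themselves (rather than their absolute values) appear, and verifying carefully that the integral formula for $e$ coincides with the rational Seifert Euler number in the presence of exceptional fibers both on $\partial M$ and in the interior of $M'$. A secondary technical point is to confirm that the invariant connection $\alpha$ extends smoothly across the exceptional orbits of $M'$; this follows from the slice theorem together with averaging, but it is precisely the place where the locally free (as opposed to free) nature of the action must be handled with care.
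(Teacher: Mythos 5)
Your proposal is correct and takes essentially the same route as the paper's argument (sketched for the $1$:$(-2)$ resonance and attributed to \cite{Martynchuk2017}): localize the Euler number by applying Stokes' theorem on the fixed-point-free complement of small invariant balls, and then compute each local contribution from the weighted Hopf fibration induced on a small invariant $3$-sphere around the fixed point. Your Chern--Weil formulation---an invariant connection form $\alpha$ with $\alpha(\xi)=1$, the observation that $d\alpha$ is basic so that $d\alpha\wedge d\alpha=0$ on the $4$-dimensional complement, and the resulting additivity $e(\partial M)=\sum_k e(S^3_k)$ with $e(S^3_k)=1/(m_k n_k)$---is a correct way of making that mechanism precise, with only the orientation and sign conventions (which you flag yourself) left to pin down.
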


We note that the idea of using Seifert fibration in the context of integrable systems goes back to A.~T. Fomenko and H. Zieschang. In their molecule theory \cite{Fomenko1990, Bolsinov2004}, atoms and Seifert manifolds appear as the basic building blocks. However, not every loop molecule admits the structure of a global Seifert fibration. 

\begin{problem}(\textup{A.T. Fomenko})
Suppose that $X$ corresponds to a loop molecule of an integrable and non-degenerate two-degree of freedom system. Then $X$ admits a decomposition into Seifert-fibered pieces. Can one construct an algorithm that computes fractional monodromy of $X$, when it exists?
\end{problem}

A related problem is the following.

\begin{problem}
Suppose $X$ is a graph-manifold (a loop molecule). Under which geometric conditions does fractional monodromy exist along $X$?
\end{problem}

\subsection{Towards quantum fractional monodromy}

Let us come back to the example of a system with $1$:$(-2)$ resonance. Consider the (semi-local) action coordinates
\begin{align*}
I_1 &= \dfrac{1}{2\pi} \int_{\alpha_1} pdq  \ \mbox{ and} \\
I_2 &= \dfrac{1}{2\pi} \int_{\alpha_2} pdq,
\end{align*}
where the cycle $\alpha_2$ corresponds to the circle action and $\alpha_1$ is such that
$(\alpha_1, \alpha_2)$ form a basis in the first homology group of a Liouville torus. Note that $I_2 = J$.

As in the case of the usual quantum monodromy, one can
consider
 the quantisation condition
\begin{align*}
I_1 &= \hbar (n_1 + \mu_1), \\
I_2 &= \hbar (n_2 + \mu_2),
\end{align*}
which gives a semi-classical spectrum locally outside the hyperbolic branch. However,
for this spectrum one cannot transport an elementary cell around the singularity
in a continuous way. The novel idea that was introduced in \cite{Nekhoroshev2006}
is to consider not an elementary cell, but a double cell in this case. Let us explain this idea on the level of the actions. Observe that by
Theorem~\ref{theorem/fmresonance}, it is possible to define 
$2I_1$ and $I_2$ also in a neighborhood of the curled torus. Therefore,
the action quantisation
\begin{align*}
2 I_1 &= \hbar (n_1+2\mu_1), \\
I_2 &= \hbar (n_2+\mu_2)
\end{align*} 
will result in a globally defined lattice which is contained in the original semi-classical spectrum and for which one can transport 
an elementary cell around the origin. By the construction, an elementary cell for this lattice is a double cell for the original spectrum.

We note that here we suppress the question of a continuous transport of an elementary cell in the joint spectrum of $(\hat H, \hat J)$ for the quantum $1$:$(-2)$ resonance system near the hyperbolic branch. 

The idea of considering a double or an $n$-cell leads to the notion of \textit{quantum} \textit{fractional monodromy} \cite{Nekhoroshev2006}. 
We refer the reader to \cite{Nekhoroshev2006} for more details.

\section{Scattering monodromy}
  
Up until now we considered integrable Hamiltonian systems such that the corresponding integral map $F$ has compact invariant fibers $F^{-1}(f), \ f \in \mathbb R^n$. In this section, we mainly discuss the non-compact case. In particular, we discuss the so-called \textit{scattering monodromy} in the context of classical
potential scattering theory. 

\subsection{Preliminaries}

A notion of scattering monodromy was originally introduced by L.~M. Bates and R.~H. Cushman in \cite{Bates2007} for a two degree of freedom hyperbolic oscillator \footnote{The hyperbolic oscillator is not a scattering system in the sense of, for instance, \cite{Knauf1999}, since the potential
of this system is unbounded at infinity and is not decaying to zero. Nonetheless, the system shares some of the properties of scattering systems, such as the existence of the so-called \textit{deflection angle}; see below.}.
At about the same time, 
scattering monodromy was introduced by H.~R. Dullin and H. Waalkens in \cite{DullinWaalkens2008} for planar scattering systems with 
a repulsive rotationally symmetric potential, both in the classical and quantum settings. The  idea behind the works 
\cite{Bates2007, DullinWaalkens2008} is as follows.

Consider a Hamiltonian system on $T^{*}\mathbb R^2$ with canonical coordinates 
$(q_1,q_2,p_1,p_2)$ defined by the Hamiltonian function
$$
H = \dfrac{1}{2}(p_1^2+p^2_2) - \dfrac{1}{2}V(r),
$$
where $V$ is a radially symmetric potential $r^2 = q_1^2+q_2^2$.  This system describes the motion of a particle on the plane $\mathbb R^2$ with coordinates $(q_1,q_2)$ under the influence 
of the potential function $V$. We observe that the system is Liouville integrable since the momentum $J = q_1p_2-q_2p_1$ is conserved. 

We shall assume, for simplicity, that the potential $V$ is a smooth, monotone function, decaying at infinity sufficiently fast. 
The bifurcation diagram of $F = (H,J)$ is shown in Fig.~\ref{fig/BD_scattering}. It consists of a single critical value, corresponding 
the maximum of $V$. This is a focus-focus singularity if the maximum is non-degenerate. In particular,
the set $R$ of the regular values of $F$ is not simply-connected. Nonetheless, it can be shown that global action-angle coordinates exist for this system; see \cite{Bates2007}. Topologically, the bundle $F^{-1}(\gamma) \to \gamma$ is a trivial cylinder $S^1\times \mathbb R$-bundle. Moreover, the energy levels $H^{-1}(h_{max} \pm \varepsilon)$
below and above $h_{max} = \max V$ are topologically the same.

\begin{figure}[ht]
\begin{center}
\includegraphics[width=0.9\linewidth]{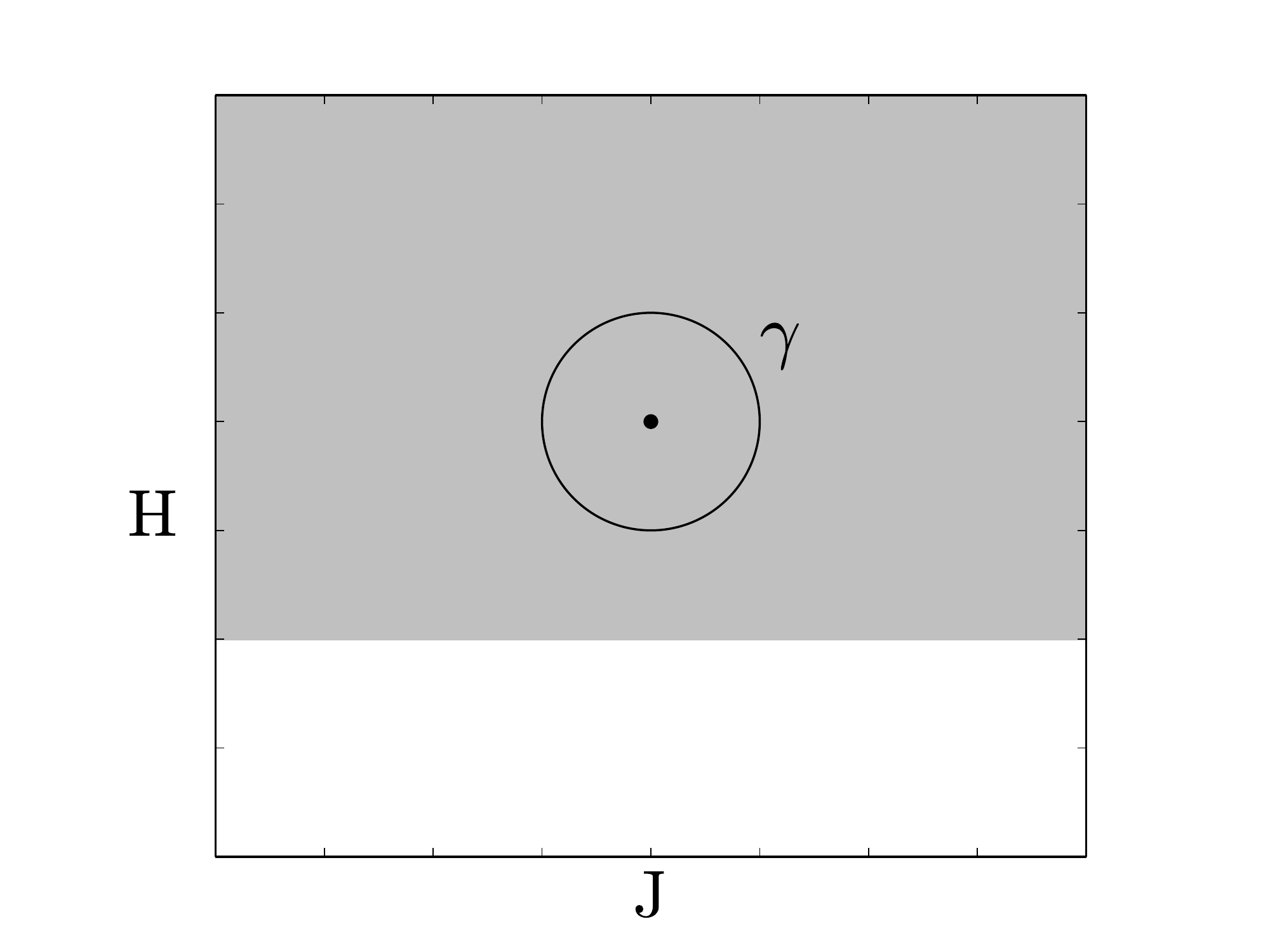}
\caption{Bifurcation diagram for the integral map $F = (H,J)$ of a planar scattering integrable system with a repulsive potential. }
\label{fig/BD_scattering}
\end{center}
\end{figure}

To get a non-trivial invariant, the authors of \cite{Bates2007, DullinWaalkens2008} considered the so-called deflection angle of a trajectory. Specifically,
observe that under the Hamiltonian dynamics, a particle in the plane gets deflected by $V$. It proceeds to spatial infinity in 
both forward and backward time, unless it approaches the maximum of the potential. To any such scattering trajectory, one can associate the deflection angle
$$\Phi = \dfrac{1}{2\pi}\int\limits_{-\infty}^{+\infty} \dfrac{d\varphi(q(t))}{dt} dt,$$
where $\varphi$ is the polar angle in the configuration $q_1q_2$-plane. Due to rotational symmetry, the deflection
angle is a function of $F = (H,J)$. Hence, one can consider its variation along $\gamma$.
\begin{theorem} \textup{(\cite{Bates2007, DullinWaalkens2008})}
In the above setting, the variation of the deflection angle $\Phi$ along $\gamma$ is equal to $-1$.
\end{theorem}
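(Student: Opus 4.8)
The plan is to compute the deflection angle explicitly in polar coordinates and then to recognise that, viewed as a function on the set $R$ of regular values, $\Phi$ fails to be single-valued precisely because it has a jump discontinuity across a ray emanating from the critical value; the loop $\gamma$ crosses this ray exactly once, and the size of the jump is the asserted $-1$. First I would pass to polar coordinates $(r,\varphi)$ with conjugate momenta $(p_r,p_\varphi)$. Rotational symmetry gives $p_\varphi=J$ and $\dot\varphi=\partial H/\partial p_\varphi=J/r^{2}$, while energy conservation yields $\dot r=p_r=\sqrt{2h+V(r)-J^{2}/r^{2}}$. Using the time-reversal symmetry of a scattering trajectory about its turning point $r_{\min}(h,J)$ (the largest root of $2h+V(r)-J^{2}/r^{2}=0$), one obtains
\[
\Phi(h,J)=\frac{1}{\pi}\int_{r_{\min}(h,J)}^{\infty}\frac{J\,dr}{r^{2}\sqrt{\,2h+V(r)-J^{2}/r^{2}\,}}.
\]
Since $V$ is smooth, monotone, and decays fast enough at infinity, this improper integral converges and defines a smooth function of $(h,J)$ on each of the half-planes $J>0$ and $J<0$.

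The next step is to locate and measure the discontinuity. The only place where smoothness can break down, apart from the critical value $(h_c,0)$ itself, is where the turning point collides with the scattering center, $r_{\min}\to 0$. This happens exactly along the ray $\ell=\{J=0,\ h>h_c\}$, on which the radial $(J=0)$ trajectory has just enough energy to pass through $r=0$. I would extract the jump by isolating a small neighbourhood of the center: freezing $V(r)\approx V(0)$ and writing $a^{2}=2h+V(0)$, the near-center part of the integral reduces, after the substitution $u=1/r$, to an elementary integral,
\[
\frac{1}{\pi}\int\frac{J\,du}{\sqrt{a^{2}-J^{2}u^{2}}}=\frac{1}{\pi}\arcsin\frac{Ju}{a},
\]
whose value at the turning point $u=a/|J|$ is $\pm\tfrac12$ according to the sign of $J$. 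The remaining part of the integral, with $r$ bounded away from $0$, is odd in $J$ and tends to $0$ as $J\to0$, so it contributes equally to both one-sided limits and cancels in the jump. Hence
\[
\lim_{J\to 0^{+}}\Phi=+\tfrac12,\qquad \lim_{J\to 0^{-}}\Phi=-\tfrac12,
\]
i.e.\ $\Phi$ jumps by $-1$ upon crossing $\ell$ in the direction of decreasing $J$.

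To conclude I would check that this is the \emph{only} branch cut. On the complementary part of the axis, $\{J=0,\ h<h_c\}$, the turning point stays bounded away from $0$, so both one-sided limits of $\Phi$ equal $0$ and $\Phi$ extends continuously there. Therefore $\Phi$ is a genuine single-valued smooth function on the simply connected set $R\setminus\ell$, and its variation around any loop equals the total jump across the cuts the loop crosses. The loop $\gamma$ encircling $(h_c,0)$ crosses the regular part of the axis once (contributing nothing) and crosses $\ell$ exactly once, so $\mathrm{Var}_\gamma\Phi=-1$ for the orientation of $\gamma$ fixed in \cite{Bates2007, DullinWaalkens2008}.

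The hard part will be the uniform control of the scattering integral as $J\to 0$: this is needed both to justify splitting it into a near-center piece plus a remainder and to guarantee that $r_{\min}(h,J)$ and the integral depend smoothly on $(h,J)$ off $\ell$. This is exactly where the hypotheses on $V$ enter—fast decay secures convergence of the $r\to\infty$ tail, monotonicity guarantees a single turning point, and non-degeneracy of the maximum ensures that $(h_c,0)$ is a focus-focus value of complexity one and that the linearisation $V\approx V(0)$ captures the leading near-center behaviour. I would finally note that the answer $-1$ fits the geometric picture: the jump records a single passage of trajectories through the center, the non-compact scattering analogue of the complexity-one focus-focus contribution and of the rotation-number variation of Theorem~\ref{theorem/rotation_number_mon}.
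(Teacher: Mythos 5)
Your argument is correct, and it is essentially the proof behind the paper's citation: the survey states this theorem without proof, referring to \cite{Bates2007, DullinWaalkens2008}, and your route --- the quadrature formula for $\Phi$ from rotational symmetry, the $\arcsin$ evaluation of the near-centre piece giving one-sided limits $\pm\tfrac12$ as $J\to 0^{\pm}$ for $h>h_c$, continuity across $\{J=0,\ 0<h<h_c\}$, and reading off the variation as the jump across the ray $\ell$ --- is essentially the computation carried out in those sources. The one convention you should make explicit is that, with the paper's Hamiltonian $H=\tfrac12\|p\|^2-\tfrac12 V(r)$, your key inequality $2h+V(r)\ge 2h+V(0)=a^2$ for all $r$ (which is what guarantees a single turning point with $r_{\min}\le |J|/a$ and the uniform control of the tail that you correctly flag as the hard analytic part) requires $V$ to be monotone \emph{increasing} toward $0$, i.e.\ the repulsive case of Fig.~\ref{fig/BD_scattering}; under the attractive reading of the same formula both the bifurcation diagram and the turning-point analysis would be different, so the repulsiveness hypothesis should be invoked explicitly rather than just ``monotone''.
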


The above approach to scattering monodromy is based on the notion of a deflection angle, which is very close to the notion of a rotation number for compact systems. We note that one can approach scattering monodromy also from other (related) perspectives. For instance,
in \cite{DullinWaalkens2008} the authors used radial actions for the pair of integrable systems: the original system given by $V$ and a reference system with the zero potential (the free flow). These radial actions 
$$
I = \dfrac{1}{\pi} \int^{\infty}_{r_0} p_rdr \ \mbox{ and } \ I_{\mathrm{ref}} = \dfrac{1}{\pi} \int^{\infty}_{r'_0} p^{\mathrm{ref}}_rdr
$$
do not exist individually. However, if the potential $V$ decays sufficiently fast, their difference exists. More specifically,
the limit
\begin{equation*}
\lim_{r \to \infty} \dfrac{1}{\pi} \int^R_{r_0} p_rdr - \dfrac{1}{\pi} \int^R_{r'_0} p^{\mathrm{ref}}_rdr
\end{equation*}
exists and behaves like a usual radial action of a compact system with a rotationally-symmetric potential. In particular, transporting this radial action and the action
$J$ along $\gamma$, one gets a monodromy automorphism of the usual form:
\begin{equation*}
 M_\gamma = \begin{pmatrix} 1 & m_{\gamma} \\ 0 & 1\end{pmatrix},
 \end{equation*}
where $m_{\gamma} = -1$ is the variation of the deflection angle. 

Related to this is a `billiard' approach, which is also based on the action coordinates. It is applicable whenever a given integrable system with non-compact fibers is separable. We refer the reader to the works \cite{Delos2008, Sadovski2016, Meesters2017}.

We also mention the work \cite{Efstathiou2017}, where the notion of \textit{non-compact monodromy} was introduced. Here the idea is that for a non-compact integrable system with the integral map $F$ and a global circle action, one can compactify the fibers of $F$ near a focus-focus fiber preserving the circle action. Then one gets a compact fibration with the usual monodromy around the focus-focus fiber. In \cite{Efstathiou2017},  this monodromy is called 
non-compact. It coincides with the scattering monodromy for the above two-degree of freedom systems.

Finally, we mention the work \cite{Martynchuk2019}, where the authors follow the point of view of classical potential scattering theory; see, in particular, \cite{Knauf1999}. The novelty of this work is that it is applicable to possibly many degrees of freedom scattering and integrable systems that are not necessarily rotationally symmetric. This approach generalises the above approaches to scattering monodromy. We discuss it in some more detail below.

\subsection{Classical scattering theory}

Below we briefly review classical potential scattering theory, following mainly A. Knauf \cite{Knauf1999, Knauf2011} and J. Derezinski and C. Gerard \cite{Derezinski2013}; see also \cite{Martynchuk2018, Martynchuk2019}.

Consider a pair of  Hamiltonians  on $T^{*}\mathbb R^n$ given by
$$H = \frac{1}{2} \|p\|^2  + V(q) \mbox{ \ and \ } H_r = \frac{1}{2} \|p\|^2+V_r(q),$$
where the (singular) potentials $V$ and $V_r$ are  assumed to decay sufficiently fast.
Let $g_H^t$ denote the Hamiltonian flow. Define the invariant set $s$ of scattering states by
$$s = \{(q,p) \in T^{*}\mathbb R^n \mid H(q,p) > 0, \ \textup{sup}_{t \in \mathbb R^{\pm}} \|g_H^t(q,p)\| = \infty\}.$$

If the potential  $V$ decays at infinity sufficiently fast (for example, is of short range \cite{Knauf1999, Derezinski2013}), then the trajectories are asymptotic to straight lines. Moreover, for any $x \in s$,
the following functions, usually called the \textit{asymptotic direction}
and the \textit{impact parameter} of $g_H^t(x)$, 
\begin{equation*}
 {\hat{p}}^{\pm}(x) = \lim_{t\to \pm \infty}  p(t,x) \  \mbox{ and } \ 
 q_{\bot}^{\pm}(x) = \lim_{t\to \pm \infty}  q(t,x) - \langle q(t,x),  {\hat{p}}^{\pm}(x)\rangle \frac{{\hat{p}}^{\pm}(x)}{2h},
\end{equation*}
are defined and  depend continuously on $x \in s$. (Here $h$ is the energy of $g^t_H$.) In other words, the space of trajectories $s / g^t_H$, that is, 
the quotient space of $s$ with respect to the Hamiltonian flow $g^t_H$, gets parametrised by the trajectories of the free Hamiltonian $H = \frac{1}{2}\|p\|^2.$
 Due to the  $g_H^t$-invariance, we get the maps
$$
A^{\pm} = (\hat{p}^{\pm},q_{\bot}^{\pm}) \colon  s / g^t_H \to  AS 
$$
from $s / g^t_H$ to a subset $AS \subset \mathbb R^{n}\times\mathbb R^n$ of the `asymptotic states'. 

Similarly, one can construct
the maps 
$$
A_r^{\pm} = (\hat{p}^{\pm},q_{\bot}^{\pm}) \colon  s_r / g^t_{H_r} \to AS
$$
for the Hamiltonian $H_r =  \frac{1}{2}p^2 + V_r(q).$ 

\begin{definition} \textup{(\cite{Knauf1999,Martynchuk2019})} Let $M$ be a $g_H^t$-invariant submanifold of $s$. Assume that the composition map
 $$
 S = (A^{-})^{-1} \circ A^{-}_r \circ (A^{+}_r)^{-1} \circ A^{+}
 $$
is well defined and maps the set $B = M / g_H^t$ to itself. The map $S$ is then called the \textit{scattering map} with respect to $H, H_r$ and $B$.
\end{definition}

\subsection{Monodromy in scattering  systems}

To define scattering monodromy, we need to restrict the class of possible reference systems to those for which the corresponding scattering map preserves 
the integral fibration at infinity.

\begin{definition}   \textup{(\cite{Martynchuk2019})}
Consider a Hamiltonian $H$ which gives rise to a scattering integrable  system with the integral map $F$. A  Hamiltonian $H_{r}$ will be called a 
\textit{reference} Hamiltonian for this system if 
\begin{equation} \label{eq/asymptotic_condition}
 F\left(\lim\limits_{t\to+\infty} g^t_{H_{r}}(x)\right) = F\left(\lim\limits_{t\to-\infty}g^t_{H_{r}}(x)\right)
\end{equation}
 for every scattering trajectory $t \mapsto g^t_{H_{r}}(x)$.
\end{definition}

\begin{remark} 
We note that
Eq.~\eqref{eq/asymptotic_condition} appeared in a related context in the work \cite{Jung1993}.  
 \end{remark}

Consider the Liouville fibration $F \colon s \to \mathbb R^n$.  Let $H_r$ be a reference Hamiltonian for $F$ such that $A^{\pm}(s) \subset A^{\pm}(s_r)$ holds. Then we have the scattering map 
$$
S \colon B \to B, \ B = s / g^t_H,
$$
which allows us to identify the asymptotic states of $s$ at $t = + \infty$ and $t = -\infty$. This results in a new total space 
$s_c$ and a new fibration
$$
F_c \colon s_c \to \mathbb R^n.
$$

\begin{definition} \label{definition/scattering_monodromy} \textup{(\cite{Martynchuk2019})}
Assume that the fibration
$$F_c \colon s_c \to \mathbb R^n$$ is a torus bundle. The Hamiltonian monodromy of this  bundle is called \textit{scattering monodromy} of $F$ with respect to $H_r$.
\end{definition}

One distinctive  property of scattering monodromy in the sense of Definition~\ref{definition/scattering_monodromy}
is its relative form (dependence on the choice of $H_r$).
For instance, if we choose $H_r$ to coincide with the original Hamiltonian $H$, Duistermaat's Hamiltonian monodromy is recovered. 

Another  property that we mention here is that using an appropriately chosen scattering map, one can define scattering monodromy for certain scattering systems that are not necessarily integrable or even nearly integrable. This is similar to the case of another scattering invariant (the so-called \textit{scattering degree}) introduced by A. Knauf in \cite{Knauf1999}
outside the context of integrability; cf. also the work \cite{Martynchuk2016}.

\subsection{Example}

Let us come back to the example considered at the beginning of this section: a Hamiltonian system on $T^{*}\mathbb R^2$ given by the Hamiltonian function
$$
H = \dfrac{1}{2}(p_1^2+p^2_2) - \dfrac{1}{2}V(r),
$$
where $V$ is a radially symmetric, monotone decaying potential.  Let  $J = q_1p_2-q_2p_1$ denote the angular momentum. Consider the curve $\gamma$ around the focus-focus fiber shown in Fig.~\ref{fig/BD_scattering}. Setting $H_r = \dfrac{1}{2}(p_1^2+p^2_2)$ and $M = F^{-1}(\gamma)$, we get the scattering map
$$
S \colon B \to B, \ B = M/g^t_H.
$$
Note that the manifold $B$ is a two-torus in this case.
\begin{theorem}  \textup{(\cite{Martynchuk2019})}
 In the first homology group of $B= F^{-1}/g^t_H,$ the scattering map $S$ is given by the matrix
  $$
 M_\gamma = \begin{pmatrix} 1 & 1 \\ 0 & 1\end{pmatrix}.
 $$
 This scattering monodromy along $\gamma$ (w.r.t. $H$ and $H_r =  \frac{1}{2}(p_1^2+p^2_2)$) is given by the same
 matrix $M_\gamma.$
\end{theorem}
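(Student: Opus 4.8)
The plan is to establish the two claims separately: first that the scattering map $S$ induces the stated unipotent matrix on $H_1(B)$, and then that this homological action is exactly the scattering monodromy of Definition~\ref{definition/scattering_monodromy}. I would begin by fixing a basis of $H_1(B)$. The space $B = F^{-1}(\gamma)/g^t_H$ is the manifold of scattering trajectories over $\gamma$; since each fibre $F^{-1}(f)$ with $f \in \gamma$ is a cylinder $S^1 \times \mathbb{R}$ on which the flow $g^t_H$ acts with circle quotient, and the rotational circle action generated by $J$ descends to $B$, the space $B$ is a two-torus fibred over $\gamma$ by the orbits of this induced action. I would take a basis $(a,b)$ of $H_1(B)$ with $b$ the class of a rotation orbit and $a$ a section of $B \to \gamma$. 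The decisive structural point is that both $H$ and the free reference Hamiltonian $H_r = \tfrac12(p_1^2+p_2^2)$ are rotationally invariant and conserve $H$ and $J$, so all four asymptotic maps $A^\pm, A^\pm_r$ are rotation-equivariant and respect the values of $F = (H,J)$. Hence $S = (A^-)^{-1}\circ A^-_r \circ (A^+_r)^{-1}\circ A^+$ commutes with the circle action and maps each fibre $F^{-1}(f)$ to itself, which forces $S_*(b) = b$ and $S_*(a) = a + m\,b$ for some integer $m$; it then remains only to compute $m$.

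To determine $m$ I would read it off as the total twist of $S$ along the fibre circles as $f$ runs once around $\gamma$. On a single fibre, $S$ rotates the circle-action orbit by the angle measuring the discrepancy between the deflection of an $H$-trajectory and that of the free trajectory sharing its asymptotic data, which is precisely the relative deflection angle encoded in the composition defining $S$. The winding number of this fibre rotation as $f$ traverses $\gamma$ is then the variation of the deflection angle $\Phi$ along $\gamma$, which by the result recalled above equals $-1$; up to the orientation of the section class $a$ this gives $m = \pm 1$, and with the conventions fixing $(a,b)$ we obtain the matrix $M_\gamma = \left(\begin{smallmatrix} 1 & 1 \\ 0 & 1 \end{smallmatrix}\right)$. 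This step is the exact non-compact analogue of Theorem~\ref{theorem/rotation_number_mon}, with the deflection angle in the role of the rotation number.

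For the second claim I would unwind Definition~\ref{definition/scattering_monodromy}: the compactified fibration $F_c \colon s_c \to \mathbb{R}^2$ is obtained by closing each non-compact cylinder fibre $S^1 \times \mathbb{R}$ into a two-torus, identifying its two asymptotic ends through the scattering map $S$. Since global action-angle coordinates exist for the non-compact system, the cylinder bundle over $\gamma$ is trivial, so the Hamiltonian monodromy of $F_c$ along $\gamma$ receives no contribution from the cylinders themselves and is produced entirely by the varying gluing. Transport of the rotation-orbit class is trivial because the circle action is global, while the class created by the gluing acquires exactly the twist recorded by $S_*$; identifying these classes with $b$ and $a$ respectively, the monodromy of $F_c$ equals the homological action of $S$ computed above, namely $M_\gamma$.

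The step I expect to be the main obstacle is the middle one: making rigorous the identification of the homological integer $m$ with the analytically defined variation of the deflection angle, together with the correct sign and normalisation and the separate handling of the $t \to +\infty$ and $t \to -\infty$ limits against the free reference. A secondary technical point is to verify in the final step that the trivialisation of the cylinder bundle can be chosen compatibly with the circle action over all of $\gamma$, so that the compactification contributes no monodromy beyond the gluing map and the scattering monodromy is genuinely $S_*$.
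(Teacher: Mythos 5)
The paper itself contains no proof of this statement: it is quoted from \cite{Martynchuk2019}, and the surrounding section only supplies the ingredients (the deflection-angle theorem of \cite{Bates2007, DullinWaalkens2008}, the construction of the scattering map, and Definition~\ref{definition/scattering_monodromy}). So the comparison can only be made against those ingredients, and your reconstruction assembles them in the natural and, as far as I can tell, correct way. The first half is sound: rotational equivariance plus the fact that the free flow is a reference Hamiltonian for $F$ (it conserves $J$ exactly and $H$ asymptotically, since $V\to 0$) make $S$ a fibrewise rotation of the torus $B \to \gamma$, hence unipotent on $H_1(B)$ with $S_*b = b$ and $S_*a = a + mb$; and since a free trajectory has identical incoming and outgoing asymptotic data, $S$ sends the $H$-trajectory with outgoing data $d$ to the $H$-trajectory with incoming data $d$, so the fibrewise rotation angle is precisely the deflection angle and $m$ is its variation along $\gamma$, i.e.\ $\pm 1$ by the theorem quoted earlier in the section. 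This is indeed the non-compact counterpart of Theorem~\ref{theorem/rotation_number_mon}, as you say. The second half is also right: over $\gamma$ the cylinder bundle is trivial and can be trivialized equivariantly (the circle action is free on $F^{-1}(\gamma)$), so $F_c^{-1}(\gamma)$ is a fibrewise mapping torus of $S$, and the Hamiltonian monodromy of such a bundle is exactly the homological action of the gluing map, giving $M_\gamma = S_*$. The caveats you flag are genuine but harmless: the sign of $m$ is purely a matter of orientation conventions---note the paper itself writes $m_\gamma = -1$ for the deflection-angle variation a few paragraphs before stating the theorem with entry $+1$---and the equivariant trivialization needed in your last step exists for exactly the reason you give.
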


Another interesting example, where a natural choice of $H_r$ is not given by the free flow, is the (spatial) Euler two-centre problem. We refer to 
the work \cite{Martynchuk2019} for details. 

\subsection{Quantum scattering monodromy}

We have already noted that for a scattering system on $T^*\mathbb R^2$ with a decaying rotationally symmetric potential $V(r)$, one can define a notion of scattering monodromy using the difference of the radial actions
\begin{equation}  \label{eq/action_difference}
I_{\mathrm{diff}} = \lim_{r \to \infty} \dfrac{1}{\pi} \int^R_{r_0} p_rdr - \dfrac{1}{\pi} \int^R_{r'_0} p^{\mathrm{ref}}_rdr,
\end{equation}
for the original system and the reference system with zero potential (the free flow); see \cite{DullinWaalkens2008}. 
Using this idea, it was shown in the same work \cite{DullinWaalkens2008} that for scattering systems in the plane, one can define a quantum analogue of scattering monodromy.  The non-triviality of this invariant also leads to a lattice defect, similarly to the compact case.

We note, however, that in quantum scattering (and even in the case of scattering in the plane), there is an additional difficulty related to the decay of the potential function: if the potential
$V$ is of long range, then the corresponding action difference given in Eq.~\eqref{eq/action_difference} diverges. This is not a problem for the classical scattering monodromy (in the sense of Definition~\ref{definition/scattering_monodromy}).
 Another interesting and related problem is to define quantum scattering monodromy for scattering integrable systems with many degrees of freedom.
For a discussion of these problems, we refer the reader to \cite{Martynchuk2018}.

\bibliographystyle{amsplain}
\bibliography{./library.bib}

\end{document}